\documentclass[aps,prb,twocolumn,footinbib,superscriptaddress, 10pt]{revtex4-2}
\usepackage{amsmath, amssymb, amsthm}
\usepackage{algorithm}
\usepackage{algpseudocode}
\usepackage[normalem]{ulem}
\algblock{Input}{EndInput}
\algnotext{EndInput}
\algblock{Output}{EndOutput}
\algnotext{EndOutput}

\usepackage[colorlinks,bookmarks=false,citecolor=blue,linkcolor=blue,urlcolor=blue]{hyperref}
\usepackage{amsfonts}
\usepackage{amsmath}
\usepackage{cleveref}
\usepackage{bm}
\usepackage{xcolor}

\newtheorem{theorem}{Theorem}

\usepackage{booktabs}
\usepackage[most]{tcolorbox}

\usepackage{newtxtext,newtxmath}

\usepackage[utf8]{inputenc}
\usepackage{soul}
\usepackage{stackengine} 

\usepackage{changepage}

\usepackage{graphicx}
\graphicspath{ {./} }

\usepackage{orcidlink}

\newcommand{\bra}[1]{\left< #1 \right|}
\newcommand{\ket}[1]{\left| #1 \right>}

\newcommand{\tr}{\mathrm{Tr}}

\newcommand{\Xt}{\{X_t\}}

\newcommand{\Z}{\mathbb{Z}}
\newcommand{\C}{\mathbb{C}}
\newcommand{\E}{\mathbb{E}}

\begin{document}

\title{Maximum-Entropy Extension of Quantum Correlation Functions from Short Real-Time Dynamics}
\author{Filippo Maria Gambetta \orcidlink{0000-0002-9182-1415}}
\email{filippo@phasecraft.io}
\affiliation{Phasecraft Ltd., Colston Street, Bristol BS1 4XE, United Kingdom}
\author{Sabrina Yue Wang \orcidlink{0000-0001-8431-6683}}
\affiliation{Phasecraft Ltd., Colston Street, Bristol BS1 4XE, United Kingdom}
\author{Raul A. Santos \orcidlink{0000-0003-1161-1328}}
\affiliation{Phasecraft Ltd., Charlotte Street, London W1T 4PW, United Kingdom}

\date{\today}

\begin{abstract}
Correlation functions are central to understanding the behavior of quantum systems, with their spectra giving direct access to quasiparticle and collective excitations. However, the long-time evolution required for sufficient spectral resolution is challenging to reach for both classical and quantum simulations. In this work, we introduce a principled approach for extending time series of quantum correlation functions and reconstructing their spectral densities, based on their connection with covariance sequences of stationary stochastic processes. This connection enables us to apply Burg's maximum-entropy principle to find the least-committal extension compatible with the measured values of the correlation function, which corresponds to an all-pole autoregressive model whose coefficients solve a set of Yule--Walker equations. The resulting extension is positive semi-definite by construction, guaranteeing the positivity of the spectrum and the automatic satisfaction of its total spectral weight sum rule. The approach applies to both scalar and matrix-valued correlation functions, including those with highly structured or continuous spectra. The generalization to the matrix case makes it possible to trade simulation time for additional measured correlation functions, enabling accurate time-series extension and spectral reconstruction from short-time data, even in the presence of noise.
\end{abstract}

\maketitle

\section{Introduction}

Time-dependent correlation functions and their spectral densities are central to our understanding of many-body quantum systems, encoding how single-particle and collective excitations propagate and connecting directly to a wide range of experimentally accessible observables, such as quantum transport coefficients, optical and dynamical conductivities, dynamical structure factors, and spectral functions~\cite{BruusFlensberg2004, MartinReiningCeperley2016, GiulianiVignale2005}. In general, computing such observables is a challenging task for state-of-the-art numerical techniques, which are limited to small system sizes, short-time evolutions, and/or low-energy sectors~\cite{Schollwoeck_2011, Paeckel_2019, Kuhner_1999, Nocera_2016, Espinoza_2024}. On the other hand, while quantum computers hold the promise for efficient simulations of quantum dynamics~\cite{Lloyd_1996, Baez_2020}, current  NISQ devices are corrupted by noise. As a result, whether classical or quantum, time dynamics simulations of complex quantum systems are restricted to comparatively short times~\cite{Arute_2020, Alam_2025_google, Alam_2025_quantinuum, Vilchez_Estevez_2025, Granet_2026, Chowdhury_2026, Hartnett_2026, Ali_2025, Ting_2026, Lee_2026, Cochran_2024, Hogan_2026, Andersen_2026}. Motivated by this limitation, a variety of post-processing methods have been proposed to extend correlation functions (as well as other observables) from short-time dynamics to longer times or, equivalently, to enhance their resolution in the spectral domain beyond that directly accessible from the simulated time window. Many of these approaches exploit the fact that correlation functions of many-body quantum systems are, in general, a sum of complex exponentials, as follows directly from the Lehmann representation~\cite{BruusFlensberg2004}. Different methods exploit this structure in different ways. Linear prediction (LP)~\cite{Makhoul_1975_linear,White_2008_spectral,Barthel_2009_spectral,Ganahl_2015,Tian_2021} introduces an autoregressive model in which each time sample is expressed as a linear combination of previous ones. Prony, matrix-pencil, and subspace methods such as ESPRIT or MUSIC~\cite{StoicaMoses2005,Erpenbeck_2026,Gomes_2023,Kaneko_2025,Keller_2020,Roy_1989,Schmidt_1986,Stroeks_2022, Shee_2025} instead seek a finite set of exponential modes and directly infer their poles and amplitudes from the sampled signal. Dynamical mode decomposition (DMD)~\cite{Schmid_2010,Shen_2026,Yin_2021,Szasz_2025,Shen_2025} takes an intermediate approach: it learns an effective linear propagator between successive snapshots and extracts its dynamical modes and associated frequencies from its eigendecomposition.

When applied to short-time series, these methods perform well only for correlation functions with simple spectral densities with a few dominant peaks, but fail in the presence of structured or continuous spectra~\cite{Kemper_2026}. The latter are ubiquitous in condensed-matter systems, arising, for instance, from many-particle continua in the dynamical structure factors of interacting fermionic and spin models, or from coexisting quasiparticle resonances, incoherent continua, and spatially dependent features in the spectral functions of inhomogeneous systems with boundaries, disorder, or impurities~\cite{Nocera_2018, Nocera_2018b, Isidori_2010, Schoenauer_2018, Peters_2011, Ding_2024, Yan_2023}. The origin of this failure is a mismatch between model capacity and available data: from $n$ (complex) samples one can retrieve at most about $n/2$ frequencies~\cite{StoicaMoses2005}. So, when a spectrum is richer than the model can represent, these methods either retain only the $\sim n/2$ dominant frequencies or project the entire spectrum onto $\sim n/2$ of them, yielding poor reconstructions or spurious features. This, in general, cannot be fixed by sampling more finely, since only a longer evolution time adds resolving power to discriminate between nearby spectral lines. Moreover, these methods are prone to ill-conditioning, making them fragile to noise, which may arise, for instance, from finite bond dimension in tensor-network simulations or from shot noise in quantum simulations.

In this work, we discuss a method tailored to highly structured spectra for which ``sum-of-complex-exponential'' approaches usually fail. Instead of trying to resolve the modes making up the dynamics of a correlation function from a short time series, we focus on finding the ``best'' process that could have generated it compatibly with the known observed points. Here, ``best'' is rigorously defined in terms of Burg's maximum entropy principle for time series~\cite{Burg1975}. We first show that any time series of $n$ samples of a stationary correlation function can be identified with a segment of the \emph{covariance} sequence of an \emph{underlying} stationary stochastic process~\cite{Burg1975, GrenanderSzego1984}. In particular, when the Gram matrix associated with the correlation function is nonsingular (as is usually the case for short-time signals with a structured or continuum spectra) there are infinitely many admissible extensions of the observed time series. Among all of them, one natural choice is the one that commits to as little as possible about the covariance lags outside the observed window. In the language of stochastic processes, this means selecting the extension of maximal differential entropy rate. According to Burg's maximum-entropy theorem, this process is a Gaussian autoregressive process of order at most $n-1$, whose coefficients follow from a set of Yule--Walker equations built directly from the known data points~\cite{Burg1975, BrockwellDavis1991}. Moreover, its poles are all strictly outside the unit disk, ensuring a non-diverging extension, with no need for any ad hoc pole regularization. 
Crucially, the autoregressive form is here \emph{derived} from the maximum-entropy selection principle rather than postulated. The distinction from typical linear prediction is subtle but fundamental: the latter treats the correlator as a signal and \emph{estimates} covariance lags from it by time averaging, whereas here the correlator values \emph{are} the covariance lags. 
Since the samples of the correlation function are identified with the covariance lags of a stochastic process, its positive semi-definiteness and the total spectral weight sum rule are automatically enforced. The importance of preserving the positive semi-definiteness of correlation functions has been recently recognized and exploited for the their analytic continuation in imaginary time~\cite{Fei_2021_matrix, Fei_2021_nervalinna, Huang_2023, Dong_2026}, and extension in real-time~\cite{Kemper_2024, Kemper_2026}. We finally note that while maximum-entropy methods are routinely used for the analytic continuation of imaginary-time correlators obtained, for instance, from quantum Monte Carlo simulations~\cite{Silver_1990, Jarrell_1996, Levy_2017}, their application in real-time is much less developed. In the former case, the entropy is a Shannon-type functional of the spectral function relative to a default model, whereas in our work it is the (differential) entropy rate of the underlying stochastic process. Despite the different functionals, both implement the same principle: select the least-committal spectrum compatible with the observed data.

The maximum-entropy construction is not restricted to scalar data, and its extension to matrix-valued correlation functions follows naturally~\cite{Burg1975}. Identifying a sequence of $M$ matrix correlation functions with a finite segment of the covariance lags of a \emph{vector}-valued stationary process, the process that maximizes the entropy rate is a vector autoregressive (VAR) model whose coefficients are determined by a set of block Yule--Walker equations. As in the scalar case, the construction preserves by design the positive semi-definiteness of the matrix spectral density and the total spectral weight sum rule. Beyond enforcing the correct analytic structure of matrix-valued correlation functions, the vector formulation provides an additional advantage. By treating all correlators as components of a single vector process, it exploits their full joint covariance structure: auto- and cross-correlations are constrained simultaneously by the positive-semi-definite block-Toeplitz Gram matrix and, equivalently, by the positivity of the matrix spectral density. This coupling is physically motivated, since all observables evolve under the same Hamiltonian and share the same underlying set of excitation energies. Consequently, information carried by one correlator constrains the admissible continuation of the others. We show that this joint, covariance-consistent modeling can substantially improve both predictive accuracy and robustness to noise compared with treating the correlators independently.

To showcase the capabilities of our approach, we apply it to the retarded Green's function of a one-dimensional Fermi-Hubbard chain with open boundary conditions. We show that, although both scalar and vector approaches are capable of producing extensions of the retarded Green's function in good agreement with the exact one, the vector model substantially outperforms its scalar counterpart: it faithfully extends the original correlator time series by nearly an order of magnitude, and reconstructs the local density of states (i.e., the site-dependent spectral function) to high accuracy even from short and noisy time series.

\section{meAR($p$) modeling for correlators}
\label{sec:arp}

When performing a classical or quantum simulation of the dynamics of some correlation function $g(t)$, one usually has access to the values of $g(t)$ on an equally-spaced time grid, i.e., $g_k := g(t_k)$ with $t_k = k\Delta t$ for $k=0,\dots,n-1$, that is to the time series $\{g_k\}_{k=0}^{n-1}$.
For the sake of simplicity, here we focus on a zero-temperature stationary correlator $g(t) = \bra{\psi} \mathcal{O}^\dagger(t)\mathcal{O}\ket{\psi}$ for some operator $\mathcal{O}$, with $\ket{\psi}$ a stationary state. All of what we say can be generalized to finite-temperature correlators straightforwardly. 
The Lehmann representation of $g(t)$ is given by~\cite{BruusFlensberg2004, GiulianiVignale2005}
\begin{equation}
	\label{eq:gt_lehmann}
	g(t) = \int_{-\infty}^{\infty} S(\omega) e^{-i \omega t} d\omega,
\end{equation}
with non-negative spectral density $S(\omega)=\sum_m W_m \delta(\omega-\omega_m) \geq 0$, with $W_m = |\bra{m} \mathcal{O} \ket{\psi}|^2 \geq 0$ and $\omega_m = E_m - E_0$. The total spectral weight is fixed by the sum rule $g(0) = \int S(\omega) d\omega$. \Cref{eq:gt_lehmann} implies that $g(t)$ is a positive semi-definite (PSD) function \cite{Kemper_2024}, namely that for any finite sequence of times $t_0,\dots,t_{n-1}$, one has
$ \sum_{ij}\lambda_i^*\lambda_j g(t_i-t_j)\geq0$, $\forall \lambda_i\in\mathbb{C}$.
In turn, this implies that the Hermitian Toeplitz matrix (also known as Gram matrix \cite{Kemper_2024})
\begin{equation}
	\label{eq:gt_Gram}
	R^{(n)} = \begin{bmatrix}
		g_0 & g_1^*  & \cdots & g_{n-1}^*\\
		g_1 & g_0 & \cdots & g_{n-2}^*\\
		\vdots & \vdots  & \ddots  & \vdots\\
		g_{n-1} & g_{n-2} & \cdots & g_0\\
	\end{bmatrix},
\end{equation}
where we defined $g_k:=g(t_k)$ and used $g_{-k}=g_k^*$ due to stationarity, is PSD, i.e.\ $R^{(n)}\succeq 0$. 

The spectral density $S(\omega)$ introduced in \cref{eq:gt_lehmann} contains a wealth of information about many-body quantum systems. For instance, the spectral density of the Green's function (also known as ``the'' spectral function) probes the behavior of quasiparticle excitations and enters into many calculations of experimentally relevant properties, such as ARPES and transport coefficients, while the spectra of response functions (often called susceptibilities) offer a direct window onto the behavior of collective excitations and can be probed, for instance, by inelastic neutron or X-ray scattering experiments~\cite{Sturm_1993, Damascelli_2003,Ament_2011_resonant, Zaliznyak_2015, MartinReiningCeperley2016}. Given their importance, the goal of many classical and quantum simulations is to compute the time series $\{g_k\}_{k=0}^{n-1}$ of some correlation function and then its spectral density \footnote{We note that there exist methods to directly compute the spectral functions~\cite{Kunher_1999, Jeckelmann_2002, Nocera_2016}.}. The simplest way to do so is via a (discrete) Fourier transform: $S(\omega_m) = (2\pi)^{-1}\sum_{k=0}^{n-1} g_k e^{i k \omega_m}$, with $\omega_m = 2\pi m/n$ and $m = 0,\dots,n-1$. In this case, the spectral resolution is limited by the number of samples as $\Delta \omega = 2\pi/n\Delta t$. In practice, this choice corresponds to making a strong assumption on the value of the correlator time series outside the known interval, i.e., $\{g_k\}_{k<0} = 0$ and $\{g_k\}_{k\geq n} = 0$. 
On the other hand, methods such as linear prediction (LP)~\cite{Makhoul_1975_linear,White_2008_spectral,Barthel_2009_spectral,Ganahl_2015,Tian_2021} and subspace methods~\cite{StoicaMoses2005,Erpenbeck_2026,Gomes_2023,Kaneko_2025,Keller_2020,Roy_1989,Schmidt_1986,Stroeks_2022} assume for the correlator a strictly deterministic form, $g(t) = \sum_{m = 1}^{P} W_m e^{-i \omega_m t}$ for some value of $P$. While this functional form is physically justified by \cref{eq:gt_lehmann}, one does not know the number of complex frequencies $P$ a priori, which has to be treated as a hyperparameter. Moreover, as explained in the introduction, these methods can only resolve $\sim n/2$ frequencies and yield poor results when applied to correlation functions with a highly structured spectrum. 

But what if, instead, one looks for the spectrum obtained from assuming as little as possible about the future points $\{g_k\}_{k\geq n}$, with the only constraint of being a PSD sequence compatible with the known observations~\cite{CoverThomas2006}? This is exactly the framework in which Burg developed his maximum entropy approach to compute power spectral densities from a finite sequence of covariance lags~\cite{Burg1975}. One complication he had to face was to obtain reliable covariance lags from the finite samples of a given time series, which led him to design the well-know Burg's algorithm for reflection coefficients~\cite{Burg1975}. We now show that, when dealing with correlation functions of quantum systems, we are in a much better position since we have direct access to the covariance lags we need to compute Burg's maximum entropy spectrum. To see this, let us go back to the samples $\{g_k\}_{k=0}^{n-1}$ of a correlation function $g(t)$. Since $g(t)$ is a PSD function, $\{g_k\}_{k=0}^{n-1}$ is a segment of a PSD sequence. Then, Burg's fundamental autocorrelation matrix theorem~\cite{Burg1975} (which, in this context, is equivalent to the Carath\'{e}odory-Toeplitz extension theorem~\cite{GrenanderSzego1984}) ensures that there exists at least one PSD extension of $\{g_k\}_{k=0}^{n-1}$ to the whole $\mathbb{Z}$, $\{g_k\}_\mathbb{Z}$. In particular, if the Toeplitz Gram matrix associated with $\{g_k\}_{k=0}^{n-1}$ is strictly positive definite, $R^{(n)} \succ 0$, then there exist infinitely many such PSD extensions. On the other hand, if $R^{(n)}$ is singular, $\det R^{(n)} = 0$, i.e., if its rank is $P < n$, then the PSD extension is unique. In the latter case, we can write $g(t)$ as a sum of $P$ complex exponentials, $g(t) = \sum_{m = 1}^{P} W_m e^{-i \omega_m t}$. This is exactly the assumption underlying all LP and subspace signal processing methods. In practice, for correlation functions with structured spectra and/or short time series, the number of frequencies $P$ contributing to the dynamics of a correlation function can be $P\gg n$ and/or their spectral gaps $\Delta E$ can be too small to be resolved. In this case $R^{(n)} \succ 0$ and, as we said, infinitely many PSD extensions are compatible with the observed time series $\{g_k\}_{k=0}^{n-1}$~\footnote{Note that, in principle, what follows can also be applied to the singular or near-singular case via a ridge regularization, $R^{(n)}\leftarrow R^{(n)}+\lambda I$. This may be appropriate to obtain well-conditioned extensions for rank-deficient noisy Gram matrices. Or, as explained in \cref{app:sec:burg}, one can apply it to a reduced Gram matrix $R^{(n')}$ with $n' = \max\{k\leq n: R^{(k)}\succ 0\}$.}. In such cases, standard LP or subspace methods would allow one to retrieve only a (potentially distorted) subset of the spectral frequencies. Importantly, any of these infinitely many PSD extensions can be realized as the covariance sequence of a wide-sense stationary (WSS) discrete stochastic process $\{X_t\}_{t\in\mathbb{Z}}$, i.e., a process whose first and second moments are time-invariant~\cite{BrockwellDavis1991}; See \cref{app:sec:stochastic_processes}. We recall that, given a zero-mean discrete WSS stochastic process $\{X_t\}_{t\in\mathbb{Z}}$, its covariance lags are defined as
\begin{equation}
	\label{eq:autocov_lags}
	\gamma_k := \mathbb{E}[X_{t+k}X^*_{t}].
\end{equation}
The sequence of covariance lags $\{\gamma_k\}_{k\in\mathbb{Z}}$ is PSD and, by virtue of the Bochner-Herglotz theorem, it admits a representation in terms of a non-negative (power) spectral density $f(\theta)\geq 0$ via~\cite{BrockwellDavis1991}
\begin{equation}
	\label{eq:gamma_PSD}
	\gamma_k = \frac{1}{2\pi}\int_{-\pi}^{\pi}f(\theta)e^{i\theta k} d\theta.
\end{equation}
Assuming the spectral density $S(\omega)$ of the correlator $g(t)$ introduced in \cref{eq:gt_lehmann} has finite bandwidth $W$, i.e. $S(\omega)\approx0$ for $|\omega-\omega_0| > W$ for some $\omega_0$, which is a realistic assumption for typical quantum many-body systems, and that the sampling interval $\Delta t$ is such that $W\leq \pi/\Delta t$ (Nyquist criterion), we can link $f(\theta)$ to the spectral density of $g(t)$ via
\begin{equation}
    S(\omega) = \frac{\Delta t}{2\pi} f(-\omega \Delta t).
\end{equation}

So, the correlator time series $\{g_k\}_{k=0}^{n-1}$ can be interpreted as a segment of a sequence of the covariance lags of infinitely many WSS stochastic processes. Among these, one has particular importance: the process with covariance lags matching $\{g_k\}_{k=0}^{n-1}$ that maximizes the differential entropy rate (see \cref{app:sec:entropy_rate})~\cite{Burg1975, Choi_1987, CoverThomas2006}. Intuitively, maximizing the differential entropy rate corresponds to assume maximum uncertainty (i.e., being least committal) about the unknown lags $\{g_k\}_{k\geq n}$. Burg's maximum entropy 
rate theorem (see \cref{app:sec:burg}) states that this process is a proper complex Gaussian autoregressive (AR) process $X_t$ of order (at most) $p = n-1$, AR($p$), of the form $X_t = \sum_{k=1}^{n-1}a_k X_{t-k} + \varepsilon_t$ with coefficients $a_k\in\mathbb{C}$ obtained from the solution of the Yule--Walker (YW) equations~\cite{BrockwellDavis1991, Burg1975, Neese_1993, CoverThomas2006, SchreierScharf2010}
\begin{equation}
	\label{eq:AR_cov_lag_prediction}
	g_m = \sum_{k=1}^{n-1}a_k g_{m-k}, \qquad m = 1,\dots,p,
\end{equation}
and innovation $\varepsilon_t \sim \mathcal{CN}(0, \sigma_{\varepsilon}^2)$, sampled from a complex normal distribution with zero mean and variance $\sigma_\varepsilon^2$, with $\mathbb{E}[\varepsilon_t X^*_s] = 0$ for  $s<t$ and  innovation variance $\sigma^2_{\varepsilon} = g_0 - \sum_{k=1}^{n-1}a_k g^*_k$. We recall that a proper complex Gaussian process is a complex-valued Gaussian process with vanishing pseudo-covariance, $\mathbb{E}[X_{t}X_s] =0,\ \forall t,s$~\cite{Neese_1993,SchreierScharf2010}. See \cref{app:sec:stochastic_processes} for further details. Note that the Gaussianity of the process is not an assumption about the physical system: it is a property of the maximum entropy rate process at fixed covariance lags~\cite{Choi_1987}. Once the coefficients of the AR($p$) process are known, the original correlator time series can be recursively extended via 
\begin{equation}
    \hat{g}_{m} = \sum_{k=1}^{n-1}a_k g_{m-k}\quad\text{for } m\geq n.
\end{equation}
Note that this extension is guaranteed to be PSD. Moreover, the maximum entropy spectrum of $g(t)$ compatible with the observed samples $\{g_k\}_{k=0}^{n-1}$ can be directly obtained from the AR($p$) coefficients as
\begin{equation}
	\label{eq:maxent_spectrum}
	S_{\rm ME}(\omega) = \frac{\Delta t}{2\pi}\frac{\sigma^2_{n-1}}{\left|1-\sum_{k=1}^{n-1}a_k e^{-ik\omega \Delta t}\right|^2},
\end{equation}
without the need of any time extension and Fourier transform~\cite{Burg1975}. See \cref{app:sec:stochastic_processes,app:sec:spectral_density}. Importantly, $S_{\rm ME}(\omega)$ is guaranteed to be non-negative and satisfies the total spectral weight sum rule, $\int S_{\rm ME}(\omega) d\omega = g_0$, by construction. In practice, computing $S_{\rm ME}(\omega)$ from \cref{eq:maxent_spectrum} requires a careful choice of the grid of $\theta$ values and may become ill-conditioned if $R^{(p)}$ is near singular~\cite{Burg1975}. A more robust approach is to introduce a small ridge regularization parameter $\lambda$, $R^{(p)} \leftarrow R^{(p)} + \lambda I$, solve the YW equations, extend the correlation function to the desired time, and take a conventional discrete Fourier transform. Here, $\lambda$ is chosen as the smallest possible value such that the whole extension is PSD up to the desired time. 

\Cref{eq:maxent_spectrum} highlights one of the main differences between our approach and standard LP and subspace methods. The maximum entropy spectrum is characterized by $p$ poles, as follows from the degree-$p$ denominator in the complex variable $z=e^{-i\omega\Delta t}$ in \cref{eq:maxent_spectrum}. In contrast with LP and subspace methods, however, these poles do not need to correspond to individual excitations of the system. Rather, they are chosen so that the resulting rational spectral density is the maximum-entropy spectrum compatible with the observed covariance lags. This distinction is particularly important for continuous spectra. Whereas LP and subspace methods approximate the signal by a finite set of discrete exponential modes, the AR maximum-entropy construction uses its poles collectively to provide a rational approximation to the full spectral density. Thus, even when no individual AR pole has a direct physical interpretation, their combined contribution can efficiently represent broad continua and other non-discrete spectral features. Moreover, our approach naturally solves the following three issues one has to deal with when using standard LP or subspace methods: 
\begin{enumerate}
    \item There is no need to discard the initial samples of $\{g_k\}_{k=0}^{n-1}$ to get rid of transient effects: the whole time series $\{g_k\}_{k=0}^{n-1}$ represents the beginning of the covariance lags sequence of some underlying WSS process;
    \item there is no need to treat the AR process order $p$ (or, equivalently, the number of frequencies making up the signal) as a hyperparameter: for a non-singular Gram matrix $R^{(n)}$, the maximum entropy rate solution fixes the order as $p = n-1$;
    \item PSD of the correlator extension, non-negativity of the spectral density, and total spectral weight sum rule are ensured by construction.
\end{enumerate} 
We discuss some further important differences between our approach and standard LP in \cref{app:sec:linear_prediction}.

To conclude this section, we summarize here the general prescriptions of the method we introduced above, which we dub maximum entropy AR (meAR):
\begin{tcolorbox}[colback=blue!5!white, colframe=blue!75!black]
Given $n$ samples of a correlation function $g(t)$, $\{g_k\}_{k=0}^{n-1}$, build the corresponding Gram matrix $R^{(n)}$. If $R^{(n)}\succ 0$, identify $\{g_k\}_{k=0}^{n-1}$ with covariance lags of some (unknown) WSS stochastic process, solve the associated YW equations, and extend the sequence; equivalently, compute its spectrum via the associated maximum entropy AR($n-1$) process. 
\end{tcolorbox}

\section{meVAR($p$) models for matrix correlators}
\label{sec:varp}

In this section we show that the meAR approach discussed in \cref{sec:arp} can be generalized to matrix correlators $G_{ij}(t) = \bra{\psi} \mathcal{O}^\dagger_i(t)\mathcal{O}_j\ket{\psi}$, with $i,j = 1,\dots,M$. Here, $i,j$ are arbitrary labels, e.g., spatial sites, spin or orbital indices, or a combination of them. Such correlators are ubiquitous in many-body physics. For instance, the case with $i,j$ as spatial site indices enters the calculation of momentum-resolved spectral and response functions. While meAR can be applied individually to all the diagonal correlators $G_{ii}(t)$, off-diagonal correlators with $i\neq j$ are not PSD and, therefore, cannot be identified with covariance lags. One option, suggested in Ref.~\cite{Kemper_2024}, is to apply meAR pairwise: for each pair $i,j$, introduce the linear combinations $B = \mathcal{O}_i + \mathcal{O}_j$ and $C = \mathcal{O}_i + i \mathcal{O}_j$. Then, the off-diagonal correlators can be written as linear combinations of $\langle \mathcal{O}^\dagger_i(t)\mathcal{O}_i\rangle$, $\langle \mathcal{O}^\dagger_j(t)\mathcal{O}_j\rangle$, $\langle B^\dagger(t)B\rangle$, and $\langle C^\dagger(t)C\rangle$, all of which are PSD and can then be extended via meAR. As we will show below, the matrix correlator $G_{ij}(t)$ is a matrix-valued PSD kernel and this approach does not ensure its overall PSDness. Hence, we expect this approach to work only for matrix correlators with small off-diagonal components.

One advantage of the maximum entropy approach we introduced in the previous section is that it can be generalized to matrix correlators in a straightforward way. First, to see that $G$ is a matrix-valued PSD kernel, we define $X := \sum_{k=0}^{n-1}\sum_{i=1}^{M}\lambda_{k,i}\mathcal{O}_i(t_k)$. Since $\langle X^\dagger X\rangle \geq 0 $, $ \forall \bm{\lambda}\in\mathbb{C}^{nM}$ it follows that
\begin{equation}
    \sum_{k,m=0}^{n-1}\sum_{i,j=1}^{M}\lambda_{k,i}^*\lambda_{m,j}G_{ij}(t_k-t_m)  \geq 0.
\end{equation}
Similarly to the scalar case, defining the matrices $\mathbb{C}^{M\times M} \ni G_{k-m} := G(t_k-t_m)$ and using the stationarity condition $G_{-k} = G_k^\dagger$, we can build the block Gram matrix of $G_{ij}(t)$,
\begin{equation}
	\label{eq:Gt_block_gram}
	\mathcal{R}^{(n)} = 
	\begin{bmatrix}
		G_0 & G^\dagger_1  & \cdots & G^\dagger_{n-1}\\
		G_1 & G_0  & \cdots & G^\dagger_{n-2}\\
		\vdots  & \vdots & \ddots  & \vdots\\
		G_{n-1} & G_{n-2}  & \cdots & G_0\\
	\end{bmatrix}\succeq 0,
\end{equation}
which is a Hermitian PSD block-Toeplitz matrix. Note that here each entry (block) is a matrix $G_k\in\mathbb{C}^{M \times M}$, and the Toeplitz structure holds only at block level and not for the individual components. The Lehmann representation of the matrix correlator $G_{ij}(t)$ is 
\begin{equation}
	G_{ij}(t) = \int_{-\infty}^{\infty} S_{ij}(\omega)e^{-i\omega t} d\omega,
\end{equation}
with $S(\omega)\in\mathbb{C}^{M \times M}$ the PSD spectral density matrix with entries
\begin{equation}
	\label{eq:matrix_lehmann}
	S_{ij}(\omega) = \sum_{m} \bra{\psi}\mathcal{O}_i^\dagger\ket{m}\bra{m} \mathcal{O}_j \ket{\psi}  \delta(\omega-\omega_m).
\end{equation}
In this case, we have a matrix total spectral weight sum rule, $G_{ij}(0) = \int S_{ij}(\omega)d\omega$. Note that in \cref{eq:matrix_lehmann} different indices $i,j$ enter the spectral density matrix with the same factor $\delta(\omega-E_m)$. Therefore, multiple correlators share the same poles, although individual correlators may couple only to subsets of them and with different weights. This is a consequence of the fact that their dynamics is generated by the same Hamiltonian. 

Everything we discussed for the scalar case can be applied almost directly to the matrix case. In particular, given a matrix time series $\{G_{ij, k}\}_{k=0}^{n-1}$, with $G_{ij, k} := G_{ij}(t_k)$ with $t_k = k\Delta t$, $k=0,\dots,n-1$, the fundamental multichannel autocorrelation theorem~\cite{Burg1975} guarantees that if $\mathcal{R}^{(n)}\succ0$ there exist infinitely many PSD extensions of $\{G_{ij, k}\}_{k=0}^{n-1}$ to the whole $\mathbb{Z}$. Each of these sequences is the covariance sequence of infinitely many $M$-dimensional WSS vector-valued stochastic processes $\bm{X}_t = (X_{1,t},..,X_{M,t})^T$~\cite{BrockwellDavis1991} and, by virtue of the matrix Bochner-Herglotz theorem, it admits a representation in terms of a matrix-valued spectral density $F(\theta)$. Among all of them, the one that maximizes the differential entropy rate is given by a proper complex Gaussian vector autoregressive (VAR) model~\cite{Burg1975, Neese_1993, SchreierScharf2010, Lutkepohl2005} of order at most $p=n-1$ with coefficient matrices $A_1,\dots, A_p \in\mathbb{C}^{M \times M}$ determined by the solution of the multichannel YW equations,
\begin{equation}
    \label{eq:matrix_yw}
	G_m = \sum_{k=1}^{n-1}A_kG_{m-k} ,\qquad m = 1,\dots,n-1;
\end{equation}
See \cref{app:sec:burg}. Once \cref{eq:matrix_yw} is solved for the coefficient matrices $\{A_k\}_{k=1}^{n-1}$, the correlator time series can be extended via 
\begin{equation}
    \hat{G}_{m}= \sum_{k=1}^{n-1}A_kG_{m-k}\quad\text{for } m\geq n.
\end{equation}
Let us recall that a proper complex Gaussian VAR($p$) process is written as
$\bm{X}_t = \sum_{k=1}^{p}A_k\bm{X}_{t-k} + \bm{\varepsilon}_t
$, with innovation vector $\bm{\varepsilon}_t\sim\mathcal{CN}(0,\Sigma_{\bm{\varepsilon}})$ such that $\mathbb{E}[\varepsilon_t \bm{X}^\dagger_s] = 0$ and $\mathbb{E}[\bm{\varepsilon_t} \bm{X}^T_s] = 0$ for $s<t$~\cite{SchreierScharf2010}. The innovation covariance matrix is defined as $\mathbb{E}[\bm{\varepsilon}_t \bm{\varepsilon}n_s^\dagger] = \Sigma_{\bm{\varepsilon}}\delta_{ts}$, with $\Sigma_{\bm{\varepsilon}} = G_0 - \sum_{k=1}^{p}A_k G^\dagger_k$. For a proper Gaussian vector process we have $\mathbb{E}[\bm{\varepsilon}_t \bm{\varepsilon}_s^T] = 0$. See \cref{app:sec:var} for further details.

Analogously to scalar case, the density spectral matrix of the maximum entropy VAR($p$) process can be obtained directly from its coefficients~\cite{Burg1975, LindquistPicci2015}. Assuming $S_{ij}(\omega)$ has compact support, i.e., $S_{ij}(\omega) \approx 0$ $\forall i,j$ for $|\omega-\omega_0| > W$ for some $\omega_0$ and $W\leq \pi/\Delta t$, the maximum entropy matrix spectral density of $G_{ij}(t)$ can be obtained as 
\begin{equation}
    S_{\rm ME}(\omega) = \frac{\Delta t}{2\pi} [A(e^{-i\omega \Delta t})]^{-1}\Sigma_{\bm{\varepsilon}} [A(e^{-i\omega \Delta t})^\dagger]^{-1},
\end{equation}
with $A(z) = I-\sum_{k=1}^{n-1}A_kz^k$ the matrix-valued AR polynomial. See \cref{app:sec:var}.

So, also in the matrix correlator case, the prescription of our maximum entropy VAR (meVAR) method is as follows: 

\begin{tcolorbox}[colback=blue!5!white, colframe=blue!75!black]
	Given the time series of a matrix correlator $\{G_{ij,k}\}_{k=0}^{n-1}$, build the corresponding block Gram matrix $\mathcal{R}^{(n)}$. If $\mathcal{R}^{(n)}\succ$0, interpret $\{G_{ij,k}\}_{k=0}^{n-1}$ as the beginning of the covariance sequence of a vector WSS stochastic process, solve the associated multichannel YW equations, and extend the matrix time series; equivalently, compute its spectral density matrix via the associated maximum entropy VAR($n-1$) process.
\end{tcolorbox}

As noted below \cref{eq:matrix_lehmann}, the components of the spectral density matrix of a many-body system share the same underlying set of excitation energies determined by the Hamiltonian. Intuitively, jointly modeling several correlations function should therefore provide more information about this common spectral structure than any single correlator over the same training-time window. A VAR($p$) model is capable of capturing both temporal correlations within individual components and cross-correlations between them, allowing it to efficiently exploit dynamical information shared across multiple correlators. 

Recent approaches have begun to exploit information beyond a single scalar time series, including ``space--time'' linear prediction~\cite{Tang_2025} and multi-observable DMD~\cite{Shen_2026}. Related multichannel ideas have a longer history in cross-correlated harmonic inversion~\cite{Main_1999} and, more recently, in matrix-valued ESPRIT~\cite{Zhang_2024,Dong_2026}. These approaches couple different signals through a common predictive model or shared spectral representation, and typically require choices such as prediction ranges, model rank, singular-value truncation, pole stabilization, or other regularization parameters. Moreover, physical constraints such as positivity and spectral normalization are not, in general, consequences of the underlying multichannel fitting procedure itself, although they can be incorporated explicitly in some formulations~\cite{Zhang_2024,Dong_2026}. In contrast, meVAR exploits the additional structure that arises when the signals form a matrix-valued covariance sequence: auto- and cross-correlations are treated jointly and constrained through their PSD block-Toeplitz Gram matrix. Rather than coupling different channels solely through a common dynamical model or shared spectral poles, meVAR therefore enforces finite-lag covariance consistency by construction. This ensures the PSDness of the extension of the matrix correlator and of its matrix spectral density, and preservation of the total spectral-weight sum rule. Our approach requires comparatively little parameter tuning and remains computationally inexpensive up to a large number of correlators since it requires the manipulation of $nM \times nM$ matrices. As we show below, this joint modeling of multiple correlators leads to improved extrapolation accuracy, spectral resolution, and robustness to noise. This makes meVAR particularly attractive not only when matrix-valued correlation functions are required, but also for extrapolating individual scalar correlators when the additional cross-correlators can be measured at an acceptable cost.

\section{Example: 1D Fermi-Hubbard chain with open boundaries}
\label{sec:1dhubbard}
In this section we apply the maximum entropy approach for the extension of correlation functions we introduced in this work to a 1D Fermi-Hubbard model with length $L$ and open boundary conditions with Hamiltonian
\begin{align}
	H &= -t_h\sum_{i=0}^{L-2}\sum_{\sigma}(c^{\dagger}_{i,\sigma} c_{i+1,\sigma} + c^{\dagger}_{i+1,\sigma} c_{i,\sigma}) \nonumber \\ &+U\sum_{i=0}^{L-1}n_{i,\uparrow}n_{i,\downarrow} - \mu \sum_{i=0}^{L-1} n_i.
\end{align}
Here, $c^{\dagger}_{i,\sigma}$ and $c_{i,\sigma}$ are fermionic creation and annihilation operators, $n_{i,\sigma} = c^{\dagger}_{i,\sigma} c_{i,\sigma}$ is the spin-resolved local density, and $n_i = n_{i, \uparrow} + n_{i, \downarrow}$ is the total local density. We apply the meAR and meVAR approaches to the spin-summed retarded Green's function $G^{R}_{ij}(t) = -i \theta(t) C^R_{ij}(t) $, with
\begin{equation}
	\label{eq:CR}
	C^{R}_{ij}(t) := \sum_\sigma\bra{\psi}\{c_{i,\sigma}(t), c^\dagger_{j, \sigma}\}\ket{\psi},
\end{equation}
$\theta(t)$ Heaviside function, and $\{A, B\} = AB + BA$ the usual anticommutator. Note that $C^{R}_{ij}(t)$ is the sum of two matrix-valued PSD kernels and is therefore PSD (while $G^{R}_{ij}(t)$ is not). Hence, me(V)AR can be applied either to $C^{R}_{ij}(t)$ or to its two components individually. In general, the latter approach gives better results in the noiseless case, but it is less robust in the presence of noise. In this section, we choose the former approach. Due to the presence of open boundary conditions, $C^{R}_{ij}(t)$ is not translationally invariant and cannot be reduced to a scalar correlator via Fourier transform. Therefore, to compute $C^{R}_{ij}(t)$ using meAR we need to make use of the pairwise strategy discussed at the beginning of \cref{sec:varp}~\cite{Kemper_2024}. In what follows, we consider a 1D chain with $L=11$ and set $t_h=1$, $U=4$, $\mu = 2$, and the number of fermions in each spin sector as $N_{\uparrow} = N_{\downarrow} = 5$. In what follows, times will be measured in units of $t_h$. For both meAR and meVAR, we will use the correlator time series sampled at times $t_k=k\Delta t $, $k = 0,\dots, n-1$. Hence, the order $p$ of both meAR and meVAR is set to $p = n-1$. 

In \cref{fig:CR_spacetime} we compare the exact space-time dynamics of $C^R_{i5}(t)$ (left) with its extensions from $t=3$ obtained via pairwise meAR (middle) and meVAR (right). Note that this time series is extremely short and, at large separation $|i-j|$, $C^{R}_{ij}(t)\approx0$ throughout much of the training window. It is therefore unsurprising that meAR can only faithfully extend $C^{R}_{ij}(t)$ for a limited time before the predicted signal gradually decays towards zero. In contrast, the meVAR extension, constructed from the full $11\times11$ matrix of correlators $C^{R}_{ij}(t)$, remains in excellent agreement with the exact dynamics for substantially longer times. In particular, it is able to capture all the nontrivial space-time features such as the propagating light cone and its subsequent reflections from the boundaries. This improvement is a direct manifestation of the joint multichannel modeling: rather than extending weak or nearly vanishing correlators from their own short history, meVAR exploits the auto- and cross-correlations among all matrix elements, allowing dynamical information contained in strongly resolved channels to constrain and stabilize the extrapolation of more weakly resolved ones. In this way, the shared spatiotemporal structure of the correlation matrix provides substantially more predictive information than is available to any individual scalar time series.

In \cref{fig:LDoS} we show that this improvement is even more striking when looking at the local density of states (LDoS),
\begin{equation}
	A_{ii}(\omega) := -\frac{1}{\pi}\mathrm{Im}G^{R}_{ii}(\omega + i\eta),
\end{equation}
with $\eta > 0$ a Lorentzian broadening factor, obtained by taking the Fourier transform of the extended correlators (or, equivalently, computed from the maximum entropy formula for the spectral densities). The first thing we note is that, for each $i$ the exact LDoS (left panel) is highly non trivial, containing contributions from several frequencies. This is exactly the case with which methods based on the ``sum-of-complex-exponentials'' approaches would struggle: given the short time series, they would only be able to identify a few of these frequencies. In contrast, from the central panel of \cref{fig:LDoS} we can see that, even for such a short time series, the LDoS obtained via meAR is a good approximation of the exact one: instead of a few peaks, we have a smoothened version of the exact LDoS. Similarly to \cref{fig:CR_spacetime}, the quality of the spectral reconstruction improves dramatically when multiple correlators are modeled as a meVAR process: the LDoS obtained via meVAR (right panel) is in perfect agreement with the exact one, with all its non-trivial features captured faithfully.

In \cref{fig:CR_offdiag_vs_radii} we investigate how the performance of meVAR depends on the number of correlators included in the model. In particular, we compare the extensions of the off-diagonal correlator $C^{R}_{45}(t)$ from $t = 3.0$ using pairwise meAR and meVAR as a function of the number of neighboring sites included in the VAR model. While the two approaches behave similarly when only the original sites $4$ and $5$ are included, the accuracy of the meVAR extension improves significantly as more sites/correlators are added, achieving an almost perfect agreement up to $t\simeq 10$ when all $11$ sites of the chain are included. 

\begin{figure*}
	\centering
	\includegraphics[width=\textwidth]{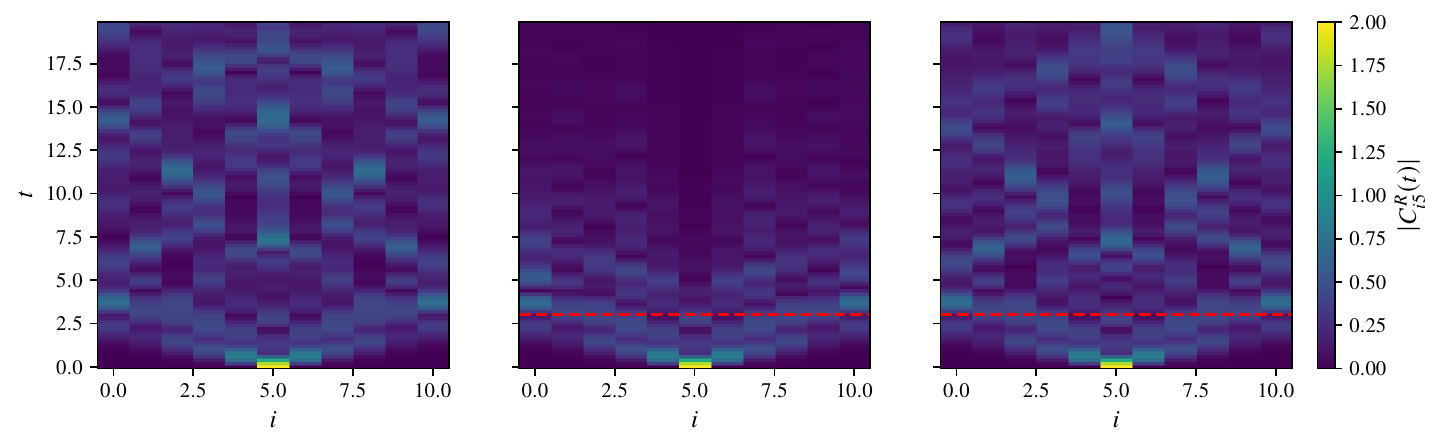}
	\caption{Comparison between exact (left), pairwise meAR (middle), and meVAR  (right) space-time dynamics of the absolute value of $C^R_{i5}(t)$. meAR and meVAR signals have been extended from $t=3.0$ (red dashed line). In meVAR we have included all the $11$ sites of the chain, i.e., all the $121$ correlators. Here, the sampling interval is $\Delta t =0.2$. Note that the pairwise meAR reconstruction (middle) is able to faithfully reproduce the original signal only up to $t \approx 5$, with a relative $L^2$ error $\|C^R_{\rm exact} - C^R_{\rm meAR}\|_2/ \|C^R_{\rm exact}\|_2 = 0.84$ over the space-time region shown in the plot. On the other hand, meVAR is able to accurately capture all the non-trivial features of the exact signal up to $t=20$ (and even beyond), with just a minor overall damping or small shifts in the position of few peaks, with a relative $L^2$ error $\|C^R_{\rm exact} - C^R_{\rm meVAR}\|_2/ \|C^R_{\rm exact}\|_2 = 0.35$ over the space-time region shown in the plot.}	
	\label{fig:CR_spacetime}
\end{figure*}

\begin{figure*}
    \centering
    \includegraphics[width=\textwidth]{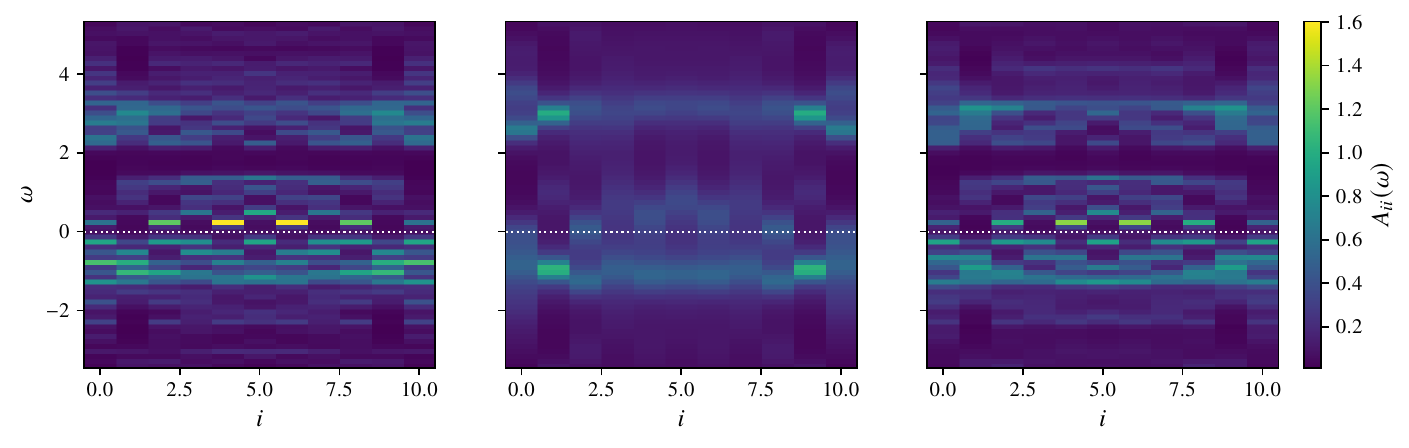}
    \caption{Local density of states $A_{ii}(\omega)$ obtained from the extension of $C^R_{ii}(t)$. Left: exact. Middle: pairwise meAR. Right: $11$-site meVAR. The spectral function has been obtained from the Fourier transform of the exact and extended signals with $t = 50$. Both meAR and meVAR results have been obtained from extensions from $t = 3.0$. Pairwise meAR (middle) provides a smooth reconstruction of the exact LDoS, with a relative $L^2$ error $\|A_{\rm exact} - A_{\rm meAR}\|_2/ \|A_{\rm exact}\|_2 = 0.53$ over the space-frequency region shown in the plot. On the other hand, the LDoS obtained from meVAR accurately reproduces all the features of the exact LDoS, with a relative $L^2$ error $\|A_{\rm exact} - A_{\rm meVAR}\|_2/ \|A_{\rm exact}\|_2 = 0.34$ over the space-frequency region shown in the plot.
    In all panels, the white dotted line represents the Fermi energy and we used a Lorentzian broadening factor $\eta = 0.1$.}
    \label{fig:LDoS}
\end{figure*}

\begin{figure*}
	\centering
	\includegraphics[width=\textwidth]{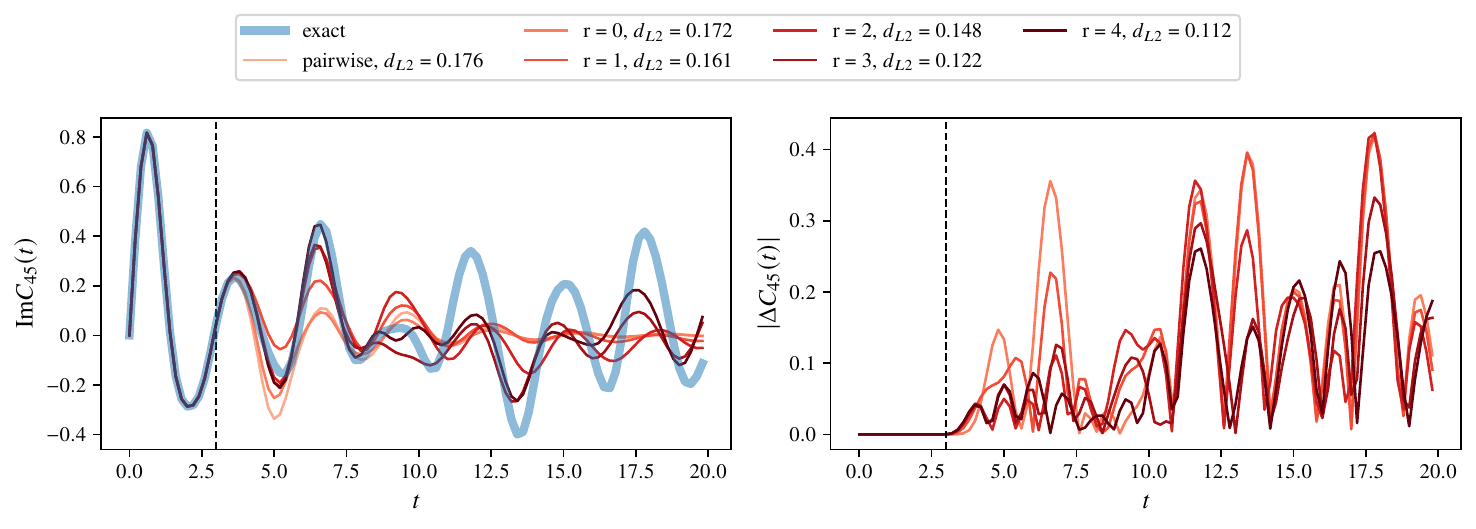}
	\caption{Comparison between the extensions of $C^R_{45}(t)$ obtained via pairwise meAR and meVAR for $C^R_{45}(t)$. Note that $C^R_{45}(t)$ is purely imaginary. In both cases, the correlator has been extended from $t = 3.0$ (vertical dashed line). Left: plot of $\mathrm{Im}C^R_{45}(t)$; Right: absolute error of the extended correlator with respect to the exact one, $\Delta C^R(t) = |C^R_{\rm ext}(t) - C^R_{\rm exact}(t)|$.  Here, the exact dynamics is plotted in blue, while the red curves denote extensions obtained through pairwise meAR and meVAR including sites with an increasing distance $r$ from the initial ones. That is, with $r=0$, $i,j\in\{4,5\}$, with $r=1$, $i,j\in\{3,4,5,6\}$, and so on. For $r=4$, all the $11$ sites (corresponding to 121 correlators) have been included in the VAR model. Pairwise meAR and meVAR with $r=0$ give a similar result. However, the performances of meVAR improves significantly by adding neighboring sites. When all the $11$ sites of the chain are included, the meVAR extension of $C^R_{45}(t)$ accurately reproduces the exact dynamics up to $t \simeq 10$. Here, the sampling interval is $\Delta t =0.2$. In the legend, $d_{L2}$ denotes the $L_2$ distance between the various curves over the shown time interval.}	
	\label{fig:CR_offdiag_vs_radii}
\end{figure*}

In \cref{fig:A_vs_radii}, we compare the local density of states (LDoS) at the center of the chain, $A_{55}(\omega)$, obtained by extending the correlator $C^R_{55}(t)$ using meAR and meVAR with all the sites of the chain. While meAR returns a smooth interpolation of the actual LDoS, meVAR is able to capture its rich structure.  

\begin{figure*}
	\centering
	\includegraphics[width=\textwidth]{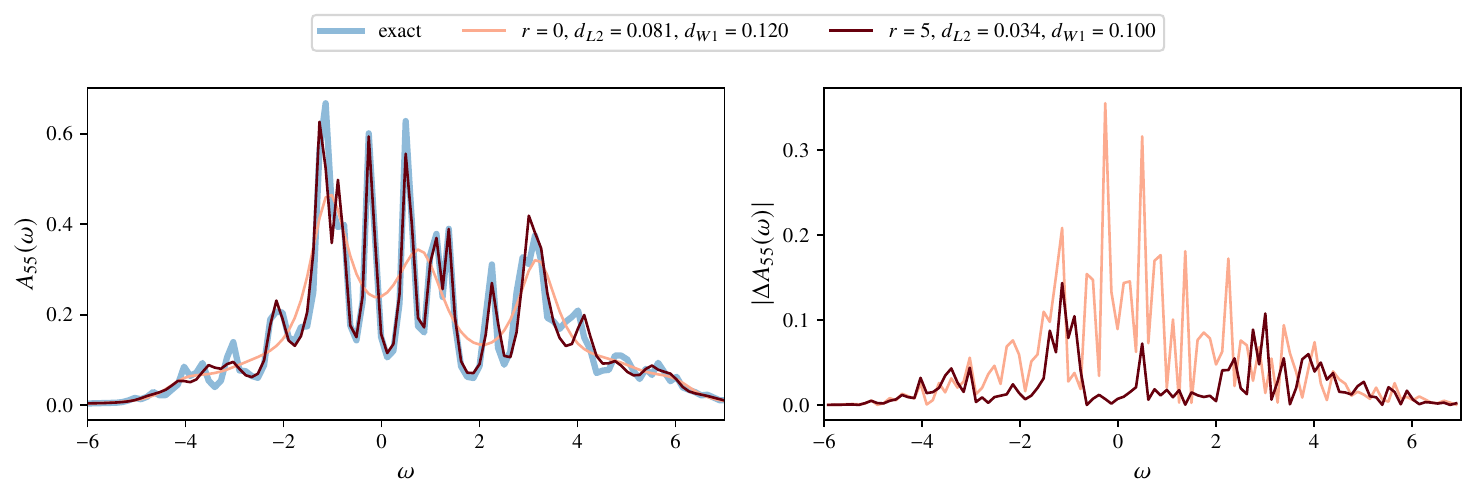}
	\caption{Comparison between local density of states (LDoS) $A_{55}(\omega)$ at the central site $i=5$ obtained using meAR and meVAR. Left: The blue curve represents the exact LDoS, obtained by a Fourier transform of the exact correlator up to  $t=50$. The light (dark) red curve represents the LDoS obtained from a meAR (meVAR with all the sites) extension up to $t = 50$ of $C^R_{55}(t)$ from $t=3.0$. Despite the presence of many sharp features in $A_{55}(\omega)$ and the short training window, the meVAR reconstruction captures the majority of the peaks with exceptional accuracy while the meAR LDoS is a smoothed interpolation of the exact spectrum. Right: Error between the exact and reconstructed LDoS. In the legend, $d_{L2}$ and $d_{W1}$ denote the $L_2$ and Wasserstein \cite{Peyre_2019, Cazelles_2021} distances between the various curves in the left panel, respectively.  Here, we used a Lorentzian broadening factor $\eta = 0.1$.}
	\label{fig:A_vs_radii}
\end{figure*}

In \cref{fig:error_vs_training} we show the relative $L^2$ error between the reconstructed LDoS and the exact one at the center of the chain, $A_{55}(\omega)$, as a function of the training time $t_{\rm train}$ from which the correlator $C_{ij}^R(t)$ is extended. We note that, even for short $t_{\rm train}$, the error of the meVAR extension is significantly smaller than the one of the meAR extension with much longer training times. In particular, the error for meVAR with $r = 5$ (all $11$ sites included) drops below $10\%$ already at $t_{\rm train} = 4$, while larger training times are needed for smaller values of $r$. A clear trade-off thus emerges between the training time and the number of correlators included in the meVAR model: the more correlators are included, the shorter the time evolution needed to obtain an accurate LDoS.  

\begin{figure}
    \centering
    \includegraphics[width=\linewidth]{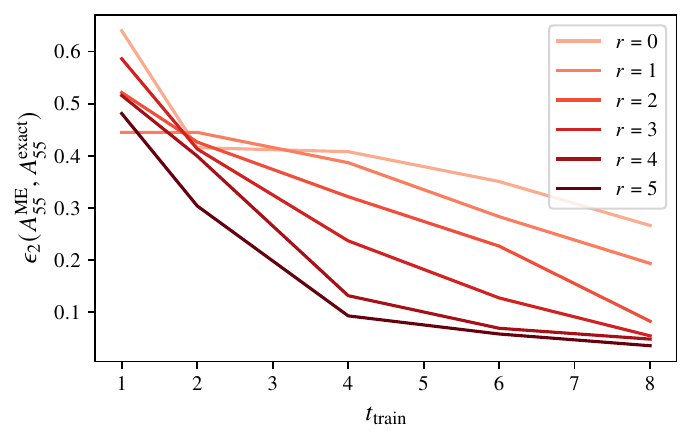}
    \caption{Relative $L^2$ error, $\varepsilon_2(x,y) = \|x-y\|_2/\|y\|_2$, between the exact LDoS at the center of the chain, $A^{\rm exact}_{55}(\omega)$, and the one obtained from extending the retarded correlation function $C^R_{55}(t)$ from $t_{\rm train}$ using meAR ($r = 0$) and meVAR including neighbors with distance $r$ from the central site,  $A^{\rm ME}_{55}(\omega)$. Including more neighbors in the VAR model allows one to achieve significantly lower reconstruction errors and from shorter training times.}
    \label{fig:error_vs_training}
\end{figure}

In \cref{fig:noise}, we investigate the performance of our maximum-entropy extension in the presence of noise. For example, in quantum-computing simulations, such noise may arise finite-shot statistics. Assessing the robustness of time-extension methods to such perturbations is particularly important, as these approaches are generally quite sensitive to noise~\cite{Erpenbeck_2026}.

To emulate the effect of noise, we add to the signal a complex Gaussian perturbation with independent real and imaginary components drawn from $\mathcal{N}(0, \sigma^2_{\rm noise})$. We first draw a single realization to represent a hypothetical noisy trajectory $\hat{C}^R_{ij}(t)$. Using this trajectory as the reference, we then generate $50$ parametric bootstrap realizations $C^{R,(s)}_{ij}(t)$ to estimate the bootstrap confidence intervals.

Since both meAR and meVAR require a PSD covariance sequence as input, each bootstrap instance $C^{R,(s)}_{ij}(t)$ is denoised using a generalization of the PSD alternating projection algorithm introduced in Ref.~\cite{Kemper_2024} to the block-Toeplitz matrix of \cref{eq:Gt_block_gram} (see \cref{app:sec:psd_denoising}). In \cref{fig:noise} we show both the extended correlator and the reconstructed LDoS for the central site $i=5$ obtained via meVAR with all the $11\times11$ correlators in the presence of strong Gaussian noise with $\sigma_{\rm noise} = 0.1$. Note that, in this case, the correlators are extended from $t = 5.0$: in the presence of noise, slightly longer time series are required to obtain accurate results. As we discuss in \cref{app:sec:psd_denoising}, in the presence of noise, it is reasonable to apply a regularized version of the PSD denoising algorithm by imposing $R^{(n)}\succeq \delta I$ for some noise-dependent threshold value $\delta$ instead of the standard $R^{(n)}\succeq 0$. This is particularly important for meAR or meVAR with few correlators, as they can be quite fragile to noise (not shown). In contrast, beyond improving the prediction power and the accuracy of the spectral reconstruction, as we have seen above, increasing the number of correlators included in meVAR also enhances its resilience to noise. To show this, in \cref{fig:noise} we use meVAR with standard, non-regularized block-PSD denoising (i.e., with $\delta = 0$). The latter is extremely efficient in smoothing the signal and emoving most of the noise-induced variance from the data, as can be seen from the comparison between the original signal (green) with the denoised one (orange) for $t \leq t_{\rm train}$. This is essential to achieve a small variance in the subsequent extension step. After denoising, each bootstrap sample $C^{R,(s)}_{ij}(t)$ is extended from $t = 5$ to get the $95\%$ bootstrap confidence interval of the extension. The solid line represents the extension of the denoised best estimator, namely the simulated experimental correlator $\hat{C}^{R}_{ij}(t)$, from $t = 5$. Note that, despite the strength of the noise, both the extended correlator and its spectral function are in good agreement with the exact ones.

\begin{figure*}
	\centering
	\includegraphics[width=\textwidth]{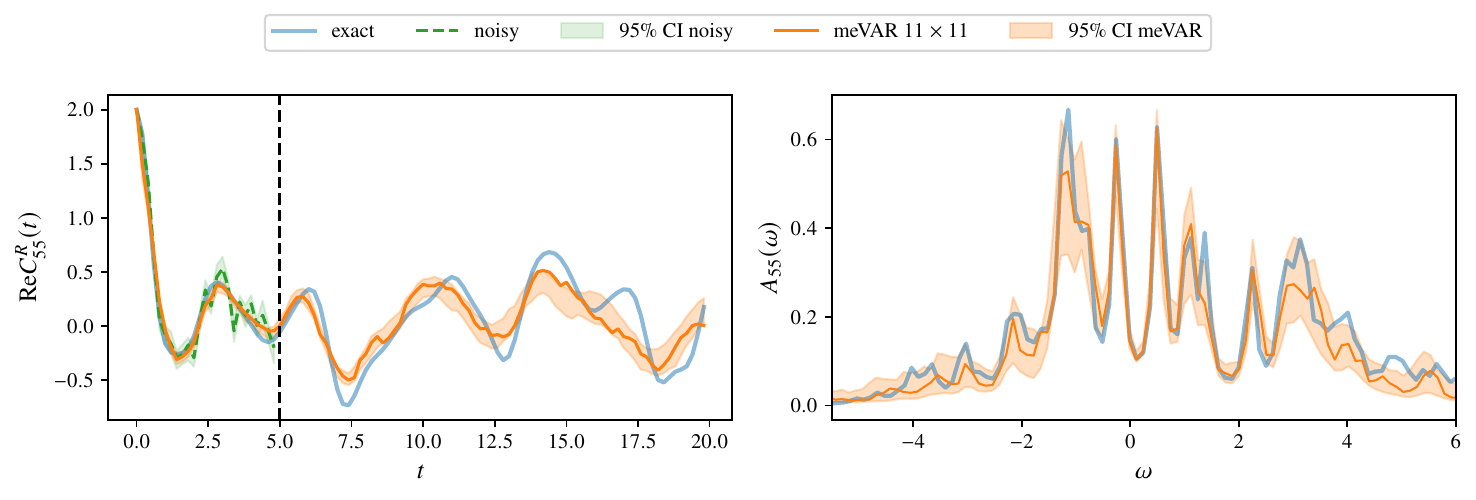}
	\caption{
    Time extension of the real part of a noisy correlator $C^R_{55}(t)$ (left) and corresponding LDoS $A_{55}(\omega)$ (right) obtained via a meVAR extension from $t=5$ including all the $11\times11$ correlators. Here, the noise is sampled from a Gaussian distribution with standard deviation $\sigma_{\rm noise} = 0.1$. The green, dashed line represents the noisy, hypothetical experimental correlator, $\hat{C}^{R}_{55}(t)$, and the shaded green area represents the $95\%$ bootstrap confidence intervals obtained from $B = 50$ bootstrap samples, $C^{R,(s)}_{ij}(t)$. Each $C^{R,(s)}_{ij}(t)$ is then denoised using the non-regularized block-PSD denoising algorithm described in \cref{app:sec:psd_denoising}. The denoised correlator $\hat{C}^R_{ij}(t)$ is subsequently extended using meVAR (solid orange curve). The shaded regions correspond to the central $95\%$ confidence interval obtained by extending each of the individual denoised bootstrap realizations $C^{R,(s)}_{ij}(t)$.}
	\label{fig:noise}
\end{figure*}

Finally, in \cref{fig:noise_sensitivity}, we investigate the effect of noise on the reconstructed LDoS as a function of the number of correlators included in the meVAR model. Here, we show the relative $L^2$ error, $\varepsilon_2$, between the exact and the reconstructed LDoS obtained from extending a noisy $C^R_{ij}(t)$ from $t_{\rm train} = 5$ as a function of $\sigma_{\rm noise}$. To mitigate the effects of noise in a systematic way, all correlators are denoised imposing $R^{(n)}\succeq \delta I$, with the same value of $\delta$ determined via a bootstrap analysis of the scalar correlator as explained in \cref{app:sec:psd_denoising}. \Cref{fig:noise_sensitivity} further confirms the improved noise resilience of the meVAR extension. Even in the presence of strong noise, $\sigma_{\rm noise} \sim10^{-1}$, meVAR allows us to achieve a two-fold reduction in the reconstruction error of the LDoS with respect to scalar meAR, and it converges to the noiseless value for intermediate noise strengths.

\begin{figure}
    \centering
    \includegraphics[width=1\linewidth]{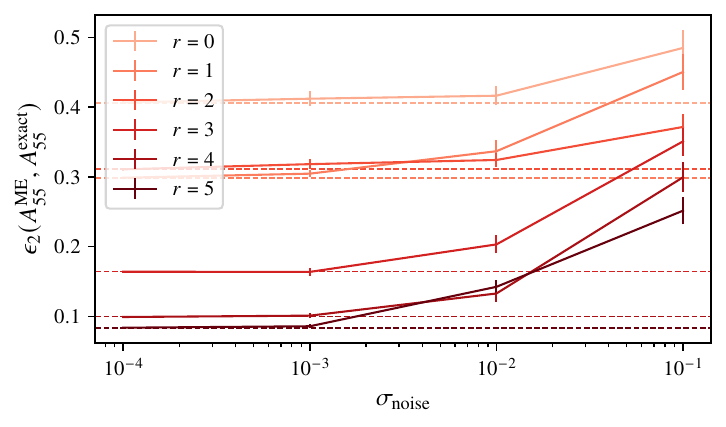}
    \caption{Relative $L^2$ error, $\varepsilon_2(x,y) = \|x-y\|_2/\|y\|_2$, between exact and reconstructed LDoS $A_{55}(\omega)$ as a function of the strength of the Gaussian noise, $\sigma_{\rm noise}$, for meAR ($r = 0$) and meVAR including neighbors with distance $r$ from site $i=5$. The solid line represents a bootstrap mean over $B=50$ instances, with the error bars given by the corresponding standard deviation. For all the values of $r$, we used the PSD denoising algorithm with regularization parameter $\delta$ obtained via a bootstrap analysis of $C^{R}_{55}(t)$ discussed in \cref{app:sec:psd_denoising}. Here, the dashed lines represent the zero-noise asymptotic values corresponding to the various values of $r$. } 
    \label{fig:noise_sensitivity}
\end{figure}

\section{Conclusions}
We have shown that the sampled time series of a quantum correlator can be identified with the covariance sequence of a wide-sense stationary stochastic process, turning the extension of short time series into a covariance extension problem. We have shown that such an extension can be determined by a maximum-differential-entropy-rate criterion. This selects an autoregressive model whose coefficients are obtained from the solution of a set of Yule--Walker equations, and guarantees a bounded extension, a positive spectral function, and an exact total spectral weight sum rule. Since this autoregressive model parameterizes a spectral density whose poles need not correspond to physical excitations but collectively form a rational approximation of the spectral density, it remains faithful even when applied to correlators with a structured or continuous spectral function. This approach generalizes naturally to matrix correlators. Besides enforcing their analytic properties, this formulation captures both temporal auto-correlations within individual components and spatial cross-correlations between them, allowing it to exploit dynamical information shared across multiple correlators. This shared structure significantly enhances the predictive accuracy and the resilience to noise of the method compared with treating each correlator independently. As discussed in \cref{app:sec:psd_denoising}, the approach we discussed in this work might be used to partially mitigate errors arising from imperfect state preparation and Trotterization, at the price of measuring additional correlation functions. We will explore this possibility in future work. We expect this method to be beneficial to both classical and quantum simulations of correlation functions, in particular for those with structured or continuous spectra, enabling spectral functions to be reconstructed with significantly higher accuracy from shorter and potentially noisy time series.

\begin{acknowledgments}
This work was supported by Innovate UK (grant number 10120959). The authors thank P. K. Faehrmann for helpful discussions at an early stage of this work.
The authors used generative AI tools to assist with the literature review, research discussions, coding, language editing, and proofreading. The authors critically reviewed the resulting suggestions and take full responsibility for the scientific content, code, and conclusions of this work. 
\end{acknowledgments}

\bibliography{Bibliography.bib}
\appendix

\section{Stochastic processes}
\label{app:sec:stochastic_processes}
We briefly summarize here the definition and main properties of a stochastic autoregressive process. See Refs.~\cite{BrockwellDavis1991,StoicaMoses2005} for further details.
A discrete stochastic process is a collection of complex random variables $\{X_t\}_{t \in \mathbb{Z}}$ indexed by a parameter $t$, which typically is the time. We assume throughout $\mathbb{E}[|X_t|^2] <\infty$, $\forall t\in\mathbb{Z}$ (a second-order process). The mean and covariance of a stochastic process are
\begin{align}
    m_t &:= \mathbb{E}[X_t],\\
    C(s,t) &:= \mathbb{E}[(X_s-m_s)(X_t-m_t)^*],
\end{align}
respectively. A complex stochastic process is also characterized by the pseudo-covariance $\mathbb{E}[(X_s-m_s)(X_t-m_t)]$ and the process is called proper when the pseudo-covariance vanishes $\forall s,t$. In what follows we will consider only proper stochastic processes. It is convenient to consider the centered process, $X_t \leftarrow X_t -m_t$, such that $\{X_t\}_{t\in\mathbb{Z}}$ has zero mean and covariance $C(s,t) = \mathbb{E}[X_sX^*_t]$. Unless stated otherwise, in this work we consider centered processes. 

A (centered) process is wide-sense stationary (WSS) if its mean is time-independent, $\mathbb{E}[X_t] = 0$ $\forall t\in\mathbb{Z}$, and its covariance only depends on the time difference, $C(t+k,t) = \gamma_k$ $\forall t,k\in\mathbb{Z}$. The sequence $\{\gamma_k\}_{k\in\mathbb{Z}}$, with 
\begin{equation}
    \label{app:eq:autocov_lags}
    \gamma_k := \mathbb{E}[X_{t+k}X^*_t],
\end{equation}
is called the covariance (or autocovariance) sequence. The covariance lags $\gamma_k$ satisfy $\gamma_{-k}=\gamma_k^*$. Assuming for simplicity that $\sum_k |\gamma_k|<\infty$, the covariance lags can be expresses in terms of the spectral density $f(\theta)$ via
\begin{equation}
    \label{app:eq:spectral_density_scalar}
    \gamma_k = \frac{1}{2\pi}\int_{-\pi}^{\pi}f(\theta)e^{ik\theta}d\theta.
\end{equation}

A zero-mean, proper, WSS process $\{X_t\}_{t\in\mathbb{Z}}$ is an autoregressive process of order $p$, denoted by AR($p$), if it satisfies
\begin{equation}
	\label{app:eq:ARp}
	X_t = \sum_{k=1}^{p}a_kX_{t-k} + \varepsilon_t,\qquad\forall t\in\Z,
\end{equation}
where $a_k\in\mathbb{C}$ are the AR coefficients and the innovation $\varepsilon_t\in\mathbb{C}$ is a random variable accounting for the component of $X_t$  that is not explained by the linear combination of the previous $p$ values. We assume innovations white and proper, i.e., $\mathbb{E}[\varepsilon_t] = 0$ and $\mathbb{E}[\varepsilon_t\varepsilon_s^*] = \sigma_{\varepsilon}^2\delta_{ts}$ and $\mathbb{E}[\varepsilon_t\varepsilon_s] = 0$, $\forall s, t\in\mathbb{Z}$, with $\sigma^2_\varepsilon$ the innovation variance of the AR($p$) process. We further assume that $\varepsilon_t$ is orthogonal to the past of the process, i.e., $\mathbb{E}[\varepsilon_tX_s^*] = 0, \forall s < t$. 

The spectral density of an AR($p$) process is given by~\cite{StoicaMoses2005}
\begin{equation}
    \label{app:eq:spectral_arp}
    f(\theta) = \frac{\sigma^2_\varepsilon}{\left|1-\sum_{k=1}^{p}a_ke^{-ik\theta}\right|^2},
\end{equation}
with $A(z) = 1 - \sum_{k=1}^{p}a_k z^k$ the AR polynomial. For a WSS process with $\sigma_\varepsilon^2>0$, orthogonality of the innovations to the past implies $A(z)\neq0$ for $|z|\leq1$~\cite{StoicaMoses2005, BrockwellDavis1991}.

In linear prediction theory, given a centered WSS stochastic process $\{X_t\}_{t\in\mathbb{Z}}$, the optimal linear one-step predictor of order $p$ is
\begin{equation}
	\label{app:eq:AR_prediction}
	\hat{X}_t := \sum_{k=1}^{p}a_kX_{t-k},
\end{equation}
where the coefficients $\{a_k\}_{k=1}^{p}$ minimize the mean-square prediction error $\mathbb{E}[|e_t|^2]$, with
\begin{equation}
	\label{app:eq:AR_prediction_error}
	e_t := X_t - \sum_{k=1}^{p}a_kX_{t-k}.
\end{equation}
By the orthogonality principle, the minimum is attained if and only if $\mathbb{E}[e_tX_{t-k}^*] = 0$ for $k=1,\dots,p$. Substituting \cref{app:eq:AR_prediction_error} into this condition and using the definition of the covariance lags in \cref{app:eq:autocov_lags}, we obtain a set of $p$ equations, known as the Yule--Walker (YW) equations,
\begin{equation}
    \label{app:eq:yule_walker}
    	\gamma_k = \sum_{j=1}^{p}a_j \gamma_{k-j}, \qquad k = 1,\dots,p,
\end{equation}
which relate the optimal coefficients $\{a_k\}$ to the covariance lags $\{\gamma_k\}_{k=0}^{p}$.

Using $\gamma_{-k} = \gamma^*_k$, \cref{app:eq:yule_walker} can be written in matrix form as
\begin{equation}
	\label{eq:YW_matrix_form}
	R^{(p)}\bm{a} = \bm{g}, 
\end{equation}
where we introduced the vectors $\bm{g} := (\gamma_1,\dots,\gamma_p)^T$, $\bm{a} := (a_1, \dots, a_p)^T$, and the Hermitian PSD Toeplitz matrix
\begin{equation}
\label{app:eq:covariance_gram_matrix}
	R^{(p)}  := \begin{bmatrix}
		\gamma_0 & \gamma_1^*  & \cdots & \gamma_{p-1}^*\\
		\gamma_1 & \gamma_0 & \cdots & \gamma_{p-2}^*\\
		\vdots & \vdots  & \ddots  & \vdots\\
		\gamma_{p-1} & \gamma_{p-2} & \cdots & \gamma_0\\
	\end{bmatrix}.
\end{equation}

If $R^{(p)}\succ 0$, the optimal coefficients $\{a_k\}_{k=1}^{p}$ are uniquely given by
\begin{equation}
	\bm{a} = [R^{(p)}]^{-1}\bm{g}.
\end{equation}
Once the optimal coefficients have been determined, the prediction-error variance, $\sigma_{p}^2 = \mathbb{E}[|e_t|^2]$ follows from $X_t = \hat{X}_t + e_t $ and $\mathbb{E}[e_t\hat{X}_t^*] = 0$,
\begin{equation}
    \label{app:eq:yw_sigma}
    \sigma_{p}^2 = \gamma_0 - \sum_{k=1}^{p}a_k \gamma_k^* = \gamma_0 -\bm{g}^\dagger\bm{a},
\end{equation}
which is real and satisfies $0\leq \sigma_p^2\leq \gamma_0$. Note that, if $\{X_t\}$ is an AR($p$) process, the optimal coefficients coincide with those in \cref{app:eq:ARp}, $e_t=\varepsilon_t$ and $\sigma_p^2=\sigma_\varepsilon^2$.

For a general centered WSS stochastic process, the solution to the YW equations \cref{app:eq:yule_walker} gives the minimum-mean-square linear predictor of $X_t$ based on its previous $p$ values. The prediction error $e_t$ is guaranteed to satisfy $\E[e_tX_{t-k}^*] =0 $ only for $k=1,\dots,p$. If instead the process is an AR($p$) process, then $\varepsilon_t = e_t$, $\sigma_p^2 = \sigma_{\varepsilon}^2$, and $\mathbb{E}[\varepsilon_t X_{t-k}^*] = 0$, $\forall k\geq 1$. In this case, 
the YW recursion $\gamma_k=\sum_{j=1}^pa_j\gamma_{k-j}$ holds $\forall k\geq 1$. Together with $\gamma_{-k}=\gamma_k^*$, this makes it possible to determine the whole covariance sequence $\{\gamma_k\}_{k\in\Z}$ from $\{\gamma_k\}_{k=0}^{p}$. More generally, if $R^{(p)}\succ0$ and $\sigma_p^2 > 0$, the YW solution defines an AR($p$) process whose first $p+1$ covariance lags coincide with the given ones~\cite{BrockwellDavis1991}. Hence any such finite sequence can be extended recursively to a valid covariance sequence on $\mathbb{Z}$, although for a non-AR process the extended lags $\gamma_k$, $k>p$, differ in general from the true ones.

\section{Vector stochastic processes}
\label{app:sec:var}

Here we generalize what we discussed in \cref{app:sec:stochastic_processes} to vector (or multichannel) stochastic processes; see Refs.~\cite{BrockwellDavis1991,Lutkepohl2005} for further details. A discrete $M$-dimensional complex stochastic process is a collection $\{\bm{X}_t\}_{t\in\mathbb{Z}}$ of random vectors $\bm{X}_t = (X_{1,t},\dots, X_{M,t})^T\in\C^{M}$. We assume $\mathbb{E}[\|\bm X_t\|_2^2]<\infty$  $\forall t\in\mathbb{Z}$, where $\|\cdot\|_2$ is the Euclidean norm. Its mean vector and covariance matrix are given by
\begin{align}
    \bm{m}_t &:= \mathbb{E}[\bm{X}_t],\\
    C(s,t) &:= \E[(\bm{X}_s - \bm{m}_s)(\bm{X}_t - \bm{m}_t)^\dagger],
\end{align}
respectively. Here, $(\bm X_s-\bm m_s)(\bm X_t-\bm m_t)^\dagger$ is an outer product, so that $C(s,t)$ is an $M\times M$ matrix with entries
\begin{equation}
    [C(s,t)]_{ij} = \E[(X_{i,s } - m_{i, s})(X_{j,t} - m_{j,t})^*].
\end{equation}
As in the scalar case, the process is called proper if its pseudocovariance $\E[(\bm{X}_s - \bm{m}_s)(\bm{X}_t - \bm{m}_t)^T]$ vanishes $\forall s,t \in \Z$. We consider only proper, centered processes. A centered vector process is WSS if it has zero mean, $\E[\bm{X}_t]=\bm{0}$, and $C(t+k,t) = \Gamma_k$, $\forall t, k \in \Z$. Its matrix covariance lags are
\begin{equation}
    \label{app:eq:autocovariance_matrices}
    \Gamma_k := \E[\bm{X}_{t+k}\bm{X}_t^\dagger],
\end{equation}
with $\Gamma_{-k} = \Gamma_k^\dagger$. Assuming for simplicity that $\sum_k\|\Gamma_k\|_2<\infty$, the covariance lags can be expressed in terms of the spectral density matrix $F(\theta)$ via
\begin{equation}
    \label{app:eq:spectral_density_matrix}
    \Gamma_k = \frac{1}{2\pi}\int_{-\pi}^{\pi}F(\theta)e^{ik\theta}d\theta.
\end{equation}

A zero-mean, proper, WSS vector process $\{\bm X_t\}_{t\in\mathbb{Z}}$ is a vector autoregressive process of order $p$, denoted VAR($p$), if it satisfies
\begin{equation}
\label{app:eq:VARp}
	\bm{X}_t = \sum_{k=1}^{p}A_k\bm{X}_{t-k} + \bm{\varepsilon}_t,\qquad\forall t\in\Z,
\end{equation}
where $A_k\in\C^{M\times M}$ are the autoregressive coefficient matrices and  $\bm{\varepsilon}_t\in\C^M$ is the innovation vector. We assume white and proper innovations, i.e., $\E[\bm{\varepsilon}_t]=\bm{0}$ and $\E[\bm{\varepsilon}_t\bm{\varepsilon}_s^\dagger]=\Sigma_{\varepsilon}\delta_{ts}$, and $\E[\bm{\varepsilon}_t\bm{\varepsilon}_s^T]=0$, with $\Sigma_{\varepsilon}\succ0$ the innovation covariance matrix. As in the scalar case, we assume that $\bm\varepsilon_t$ is orthogonal to the past of the process, $\mathbb{E}[\bm\varepsilon_t\bm X_s^\dagger]=0$, $\forall s<t$. Component-wise, \cref{app:eq:VARp} reads
\begin{equation}
	X_{i,t} = \sum_{k=1}^{p}\sum_{j=1}^M(A_k)_{ij}X_{j,t-k} + \varepsilon_{i,t},\qquad\forall t\in\Z,
\end{equation}
that is, in a VAR($p$) process the $M$ components of $\bm{X}_t$ are coupled by the off-diagonal entries of $\{A_k\}$: hence, the model naturally captures both auto- and cross-correlations. 

The spectral density of a VAR($p$) process is given by~\cite{StoicaMoses2005}
\begin{equation}
    \label{app:eq:spectral_varp}
    F(\theta) = [A(e^{-i\theta})]^{-1}\Sigma_{\varepsilon}[A(e^{-i\theta})^\dagger]^{-1},
\end{equation}
with $F(\theta)$ Hermitian and positive definite $\forall \theta$, and $A(z) = I-\sum_{k=1}^{p}A_kz^{k} $ the matrix AR polynomial. For a WSS process with $\Sigma_\varepsilon\succ0$, orthogonality of the innovations to the past implies
$\det A(z)\neq0$ for $|z|\leq1$~\cite{Lutkepohl2005}. 

For a general zero-mean WSS vector process, the optimal linear one-step predictor of order $p$ is
\begin{equation}
    \label{app:eq:VAR_prediction}
	\hat{\bm{X}}_t := \sum_{k=1}^{p}A_k\bm{X}_{t-k},
\end{equation}
with prediction error
\begin{equation}
    \label{app:eq:prediction_error_vector}
    \bm{e}_t := \bm{X}_t -\sum_{k=1}^{p}A_k\bm{X}_{t-k}.
\end{equation}

As in the scalar case, the optimal coefficient matrices $\{A_k\}_{k=1}^{p}$ minimizes $\E[\|\bm{e}_t\|_2^2]$. By the orthogonality principle, this holds if and only if $\mathbb{E}[\bm{e}_t\bm{X}_{t-k}^\dagger]=0$ for $k=1,\dots,p$. Note that each of these $p$ equations is an $M\times M$ matrix equation, $\E[e_{i,t}X_{j,t-k}^*] = 0$ $\forall i,j$. Substituting \cref{app:eq:prediction_error_vector} and using the definition of the covariance matrices in \cref{app:eq:autocovariance_matrices}, we obtain the block (or multichannel) YW equations 
\begin{equation}
    \label{app:eq:block_yw}
    \Gamma_k = \sum_{j=1}^{p}A_j \Gamma_{k-j}, \qquad k=1,\dots,p.
\end{equation}
Defining the \emph{row} vectors of matrices $\bm{\mathcal{A}} := (A_p,\dots,A_1)$ and $\bm{\mathcal{G}} := (\Gamma_p,\dots,\Gamma_1)$, the equation above can be written in compact form as
\begin{equation}
	\label{app:eq:matrix_yw}
	\bm{\mathcal{A}} \mathcal{R}^{(p)}= \bm{\mathcal{G}},
\end{equation}
with the following Hermitian PSD block-Toeplitz matrix
\begin{equation}
    \label{app:eq:covariance_gram_multi}
	\mathcal{R}^{(p)} :=\begin{bmatrix}
		\Gamma_0 & \Gamma^\dagger_{1}  & \cdots & \Gamma^\dagger_{p-1}\\
		\Gamma_1 & \Gamma_0 & \cdots & \Gamma^\dagger_{p-2}\\
		\vdots & \vdots  & \ddots  & \vdots\\
		\Gamma_{p-1} & \Gamma_{p-2} & \cdots & \Gamma_0\\
	\end{bmatrix}.
\end{equation}
If $\mathcal{R}^{(p)}\succ0$, the optimal coefficient matrices $\{A_k\}$ are uniquely given by
\begin{equation}
\label{app:eq:block_yw_solution}
\bm{\mathcal{A}} = \bm{\mathcal{G}}[\mathcal{R}^{(p)}]^{-1}.
\end{equation}

The prediction-error covariance $\Sigma_p = \E[\bm{e}_t\bm{e}_t^\dagger]$ follows from $\mathbb{E}[\bm e_t\hat{\bm X}_t^\dagger]=0$, which gives $\Sigma_p=\mathbb{E}[\bm e_t\bm X_t^\dagger]$, and hence
\begin{equation}
    \label{app:eq:Sigma_multi}
    \Sigma_p = \Gamma_0 -\sum_{k=1}^{p}A_{k}\Gamma_k^\dagger.
\end{equation}

As in the scalar case, for a general zero-mean WSS vector process the solution of the block YW equations \cref{app:eq:block_yw} gives the minimum-mean-square linear predictor of $\bm{X}_t$ based on its previous $p$ values, and the prediction error is guaranteed to satisfy $\mathbb{E}[\bm{e}_t\bm{X}_{t-k}^\dagger]=0$ only for $k=1,\dots,p$. If instead the process is VAR($p$), then $\bm{e}_t=\bm{\varepsilon}_t$, $\Sigma_p=\Sigma_{\varepsilon}$, and $\mathbb{E}[\bm{\varepsilon}_t\bm{X}_{t-k}^\dagger]=0$, $\forall k\ge1$. In this case, the block YW recursion $\Gamma_k=\sum_{j=1}^pA_j\Gamma_{k-j}$ then holds $\forall k\ge1$. Together with $\Gamma_{-k}=\Gamma_k^\dagger$, this determines the whole covariance sequence $\{\Gamma_k\}_{t\in\Z}$ from $\{\Gamma_k\}_{k=0}^{p}$. More generally, if $\mathcal{R}^{(p)}\succ0$ and $\Sigma_p\succ0$, the block YW solution defines a VAR($p$) process whose first $p+1$ covariance lags coincide with the given ones. Any such finite sequence can therefore be extended recursively to a valid covariance sequence on $\mathbb{Z}$, although for a non-VAR process the extended lags $\Gamma_k$, $k>p$, differ in general from the true ones. 

\section{Maximum entropy rate stochastic processes}
\label{app:sec:entropy_rate}
In this section we review Burg's maximum entropy principle. We follow the approach of Refs.~\cite{Choi_1987, CoverThomas2006} adapted to complex random variables~\cite{Neese_1993}. 

\subsection{Differential entropy for complex random variables}
The differential entropy of a complex random variable $Z$ with probability density function $f(z)$, $z\in\C$, is
\begin{equation}
    h(Z) := -\int_{\C} f(z) \log f(z) d^2z,
\end{equation}
where $d^2z = d(\operatorname{Re}z)\,d(\operatorname{Im}z)$ and $\log = \log_2$. For any complex random variable $Z$ with mean $\mu = \E[Z]$ and variance $\sigma^2 = \E[|Z-\mu|^2]$, one can show (using, e.g., the Kullback--Leibler divergence~\cite{Choi_1987,CoverThomas2006,Neese_1993}), that 
\begin{equation}
    h(Z) \leq \log(\pi e\sigma^2),
\end{equation}
with  equality if and only if $Z$ is a proper complex Gaussian random variable, $Z\sim \mathcal{CN}(\mu, \sigma^2)$. Recall that $Z$ is proper if its pseudo-covariance $\E[(Z-\mu)^2]$ vanishes. Writing $Z-\mu = X+iY$, with $X\sim\mathcal{N}(0,\sigma^2_X)$ and $Y\sim\mathcal{N}(0,\sigma^2_Y)$, this is equivalent to $\sigma^2_X = \sigma^2_Y = \sigma^2/2$ and $\mathrm{Cov}(X,Y) = 0$. The probability density function of a proper complex Gaussian random variable is  
\begin{equation}
    p(z) = \frac{1}{\pi\sigma^2}e^{-|z-\mu|^2/\sigma^2},
\end{equation}
with differential entropy 
\begin{equation}
    h(Z) = \log(\pi e\sigma^2).
\end{equation}

For a complex random vector $\bm{Z} = (Z_1, \dots, Z_n)^T$, with joint probability density $f(\bm{z}) = f(z_1,\dots,z_n)$, the differential entropy is
\begin{equation}
    h(\bm{Z}) := -\int_{\C^n}f(z_1,\dots,z_n)\log f(z_1,\dots,z_n) d^{2n}z.
\end{equation}
As in the single random variable case, we have
\begin{equation}
    h(\bm{Z}) \leq \log\det(\pi e K),
\end{equation}
with covariance matrix $K = \E[(\bm{Z}-\bm{\mu})(\bm{Z}-\bm{\mu})^\dagger]\succ 0$, and $\bm{\mu} = \E[\bm{Z}]$. Equality holds if and only if $\bm{Z}$ is a proper complex Gaussian random vector, $\bm{Z}\sim\mathcal{CN}(\bm{\mu},K)$, i.e., $\E[(\bm{Z}-\bm{\mu})(\bm{Z}-\bm{\mu})^T] = 0$. In this case, the probability density function is 
\begin{equation}
    p(\bm{z}) = \frac{1}{\pi^{n}\det K}e^{-(\bm{z}-\bm{\mu})^\dagger K^{-1} (\bm{z}-\bm{\mu})}
\end{equation}
and its differential entropy
\begin{equation}
    \label{app:eq:diff_entropy_gaussian_vector}
    h(\bm{Z}) = \log[(\pi e)^n\det K].
\end{equation}

\subsection{Differential entropy rate of a stochastic process}
The differential entropy rate of a stochastic process $\Xt$ is defined as
\begin{equation}
    \label{app:eq:diff_entropy_rate}
    \bar{h}(\Xt):= \lim_{n\rightarrow\infty}\frac{h(X_1,\dots,X_n)}{n}.
\end{equation}
For a stationary process this limit exists. By the chain rule, 
\begin{equation}
    \label{app:eq:chain_rule}
    h(X_1,\dots,X_n) = \sum_{i = 1}^{n}h(X_i|X_{i-1},\dots,X_1).
\end{equation}
By stationarity, and since conditioning cannot increase entropy, the terms $h(X_n|X_{n-1},\dots,X_1)$ are non-increasing in $n$.
Their limit therefore exists and equals the Cesàro mean \cref{app:eq:diff_entropy_rate}:
\begin{equation}
    \label{app:eq:entropy_rate_conditional}
    \bar{h}(\Xt) = \lim_{n\rightarrow\infty}h(X_n|X_{n-1},\dots,X_1).
\end{equation}
Recall that a stochastic process $\Xt$ with mean $m_t = \E[X_t]$ is a proper complex Gaussian process if every finite-dimensional vector $(X_{t_1},\dots,X_{t_n})^T$ is jointly complex Gaussian and
\begin{equation}
    \E[(X_t-m_t)(X_s-m_s)] = 0, \ \forall t,s.
\end{equation}
Since the entropy is invariant under translations, we can take the process to be centered. For $\bm X=(X_1,\dots,X_n)^T$, \cref{app:eq:diff_entropy_gaussian_vector} gives 
\begin{equation}
    h(X_1,\dots,X_n) = \log[(\pi e)^n \det K^{(n)}],
\end{equation}
with covariance matrix $K^{(n)} = \E[\bm{X}\bm{X}^\dagger]\succ 0$. Then, the differential entropy rate of $\Xt$ is given by
\begin{equation}
    \bar{h}(\Xt) = \log(\pi e) +\lim_{n\rightarrow\infty}\frac{1}{n}\log\det K^{(n)}.
\end{equation}
For a stationary process, $K^{(n)}$ is a Hermitian Toeplitz matrix with $K^{(n)}_{ij} = \gamma_{i-j}$ and $\gamma_k = \E[X_{t+k}X_t^*]$, i.e., $K^{(n)}=R^{(n)}$ in the notation of \cref{app:eq:covariance_gram_matrix}. Using \cref{app:eq:spectral_density_scalar} and Szeg\H{o}'s theorem~\cite{GrenanderSzego1984,Gray_2006}, we obtain
\begin{equation}
    \label{app:eq:gaussian_entropy_rate_scalar}
    \bar{h}(\Xt) = \log(\pi e) + \frac{1}{2\pi}\int_{-\pi}^{\pi}\log f(\theta)d\theta.
\end{equation}

Similarly, for an $M$-dimensional proper complex stationary Gaussian process $\{\bm{X}_t\}$ the block version of Szeg\H{o}'s theorem gives
\begin{equation}
    \bar{h}(\{\bm{X}_t\}) = M\log(\pi e) + \frac{1}{2\pi}\int_{-\pi}^{\pi}\log \det F(\theta)d\theta,
\end{equation}
with $F(\theta)$ the spectral density matrix defined in \cref{app:eq:spectral_density_matrix}. 

\subsection{Burg's theorem}
\label{app:sec:burg}
We have the following
\begin{theorem}[Burg, scalar case]
\label{app:thm:burg}
Let $\gamma_0,\dots,\gamma_p\in\C$ be a finite PSD sequence such that $R^{(p)}\succ0$ and $\sigma_p^2>0$, with $R^{(p)}$ and $\sigma_p^2$ defined in \cref{app:eq:covariance_gram_matrix,app:eq:yw_sigma}. Among all complex-valued stochastic processes $\Xt$ satisfying
\begin{equation}
    \label{app:eq:burg_constraints}
    \E[X_{t+k}X_t^*] = \gamma_k,\quad k =0,\dots,p,\quad\forall t\in\Z,
\end{equation}
the differential entropy rate is maximized by the proper complex Gaussian AR process
\begin{equation}
    \label{app:eq:burg_gaussian_arp}
    X_t = \sum_{k=1}^{p}a_kX_{t-k}+ \varepsilon_t,
\end{equation}
with $\varepsilon_t\sim\mathcal{CN}(0,\sigma_p^2)$ a proper complex Gaussian innovation independent of $\{X_s\}_{s<t}$. The AR coefficients $
a_1,\dots,a_p\in\C$ solve the YW equations \cref{app:eq:yule_walker}.
\end{theorem}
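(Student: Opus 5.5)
The plan follows the Cover--Thomas argument for Burg's theorem, adapted to the complex setting using the facts collected in \cref{app:sec:entropy_rate}. I will establish an upper bound on $\bar h(\Xt)$ for an arbitrary process satisfying \cref{app:eq:burg_constraints}, and then show that the Gaussian AR process \cref{app:eq:burg_gaussian_arp} attains it. For a general process, the chain rule \cref{app:eq:chain_rule} and the fact that conditioning reduces entropy give, for $n>p$,
\begin{equation*}
h(X_1,\dots,X_n)\le h(X_1,\dots,X_p)+\sum_{i=p+1}^{n}h(X_i|X_{i-1},\dots,X_{i-p}).
\end{equation*}
Stationarity is not assumed for the competitors: \cref{app:eq:burg_constraints} only fixes the lags $0,\dots,p$ at every $t$. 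So I will not use stationarity in this bound. Instead I will bound each term $h(X_i|X_{i-1},\dots,X_{i-p})$ through the covariance of the window $(X_{i-p},\dots,X_i)$, which is exactly $R^{(p+1)}$ built from $\gamma_0,\dots,\gamma_p$ and is independent of $i$.

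The key step is the following bound. For any random vector with covariance $K$, translate to zero mean and replace the vector by the proper complex Gaussian with the same covariance $K$. This replacement increases the joint entropy (by the bound $h(\bm Z)\le\log\det(\pi e K)$). It gives
\begin{equation*}
h(X_i|X_{i-1},\dots,X_{i-p})\le \log\bigl(\pi e\,\sigma_p^2\bigr),
\end{equation*}
which I will justify by writing $h(X_i|\text{past}_p)=h(X_i-\hat X_i|\text{past}_p)\le h(e_i)\le\log(\pi e\,\mathbb E|e_i|^2)$. Here $\hat X_i=\sum_k a_kX_{i-k}$ uses the YW coefficients. By \cref{app:eq:yw_sigma}, $\mathbb E|e_i|^2=\sigma_p^2$ depends only on $\gamma_0,\dots,\gamma_p$, so the same value holds for every competitor. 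Since $h(X_1,\dots,X_p)\le\log\det(\pi e R^{(p)})$ is finite because $R^{(p)}\succ0$, dividing by $n$ and taking $n\to\infty$ yields $\bar h(\Xt)\le\log(\pi e\sigma_p^2)$. If the limit in \cref{app:eq:diff_entropy_rate} fails to exist for a nonstationary competitor, I will use the $\limsup$ instead.

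For achievability, I will use the result recalled at the end of \cref{app:sec:stochastic_processes}: when $R^{(p)}\succ0$ and $\sigma_p^2>0$, the YW solution defines a stationary AR($p$) process whose first $p+1$ lags match the given ones, with $A(z)\ne0$ on $|z|\le1$. Taking $\varepsilon_t\sim\mathcal{CN}(0,\sigma_p^2)$ i.i.d.\ and using the causal MA($\infty$) representation makes $X_t$ a proper complex Gaussian process with $\varepsilon_t$ independent of the past. Then $h(X_n|X_{n-1},\dots,X_1)=h(\varepsilon_n)=\log(\pi e\sigma_p^2)$ for $n>p$, and \cref{app:eq:entropy_rate_conditional} gives equality. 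As a cross-check, \cref{app:eq:gaussian_entropy_rate_scalar} with \cref{app:eq:spectral_arp} together with Jensen's formula, $\frac1{2\pi}\int\log|A(e^{-i\theta})|^2d\theta=0$ for minimum-phase $A$, gives the same value.

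I expect the main obstacle to be the rigorous handling of the conditional entropy bound and of existence: showing that $h(X_i|\cdot)\le h(e_i)$ requires the translation invariance of conditional entropy, and the argument must allow processes without densities, for which the entropy is $-\infty$ and the bound is trivial. Verifying properness of the AR process through the MA($\infty$) representation is routine by comparison.
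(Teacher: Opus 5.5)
Your proof is correct, but the upper bound uses a different key step from the paper's.

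The paper first replaces the whole block $(X_1,\dots,X_n)$ by a proper Gaussian vector $Z$ with the same covariance. It applies the chain rule and ``conditioning reduces entropy'' to $Z$. It then identifies $h(Z_i|Z_{i-1},\dots,Z_{i-p})$ with the corresponding term of a Gaussian AR($p$) comparison process $Z'$, which equals $h(\varepsilon_i)=\log(\pi e\sigma_p^2)$. The chain ends at $h(Z'_1,\dots,Z'_p)+(n-p)\log(\pi e\sigma_p^2)$, which is the entropy of the optimal process itself. So attainment is built into the same display and is never argued separately.

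You never introduce $Z$ or $Z'$. You keep the chain rule on $X$ and bound each conditional term through the Yule--Walker prediction error: $h(X_i|\cdot)=h(e_i|\cdot)\le h(e_i)\le\log(\pi e\,\mathbb{E}|e_i|^2)=\log(\pi e\sigma_p^2)$. This needs only the scalar Gaussian bound plus translation invariance. You then prove attainment separately via the causal MA($\infty$) representation.

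Your route buys three things:
\begin{itemize}
\item $\mathbb{E}|e_i|^2=\sigma_p^2$ is computed from the second-moment constraints alone, so non-centred and non-stationary competitors are covered without comment. In the paper, $Z$ carries the covariance of $X$, which equals the Toeplitz matrix of the $\gamma_k$ only when $X$ is centred. So the equality with $Z'$ tacitly assumes centring, or else must be weakened to an inequality via monotonicity of Schur complements.
\item Your $\limsup$ remark handles competitors whose entropy-rate limit need not exist.
\item The equality conditions ($e_i$ proper Gaussian and independent of the $p$-step past) drop out directly. This is exactly the ``least committal'' interpretation the paper only states after its proof.
\end{itemize}
The paper's route, in exchange, follows the standard Cover--Thomas template and displays the optimal process inside the bound.

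One small wording point: $R^{(p+1)}$ is the second-moment matrix of the window $(X_{i-p},\dots,X_i)$, not its covariance. Since your bound only uses $\mathbb{E}|e_i|^2$, this is harmless.
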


The AR process in \cref{app:eq:burg_gaussian_arp} indeed reproduces the constraints of \cref{app:eq:burg_constraints}. Multiplying \cref{app:eq:burg_gaussian_arp} by $X_{t-k}^*$ and taking expectations gives \cref{app:eq:yule_walker} for $k=1,\dots,p$, since $\mathbb{E}[\varepsilon_tX_{t-k}^*]=0$, and \cref{app:eq:yw_sigma} for $k=0$. Moreover, the conditions $R^{(p)}\succ0$ and $\sigma_p^2>0$ ensure the process of \cref{app:eq:burg_gaussian_arp} exists and is stationary. If, instead, $R^{(p)}$ is singular or $\sigma_p^2=0$, let $p'=\max\{k\le p:R^{(k)}\succ0\}$. Then $\sigma_{p'}^2=0$, and every process satisfying \cref{app:eq:burg_constraints} has entropy rate $-\infty$. The covariance sequence then has a unique extension, corresponding to a sum of $p'$ undamped complex exponentials. Note also that \cref{app:thm:burg} does not assume $\Xt$ to be zero-mean, Gaussian, or WSS: the constraints fix only second moments at lags up to $p$, and, as the proof shows, nothing else enters the bound.

\begin{proof}
    We follow Refs.~\cite{Choi_1987, CoverThomas2006}. Given a finite sequence $X_1, \dots,X_n$ of an arbitrary stochastic process satisfying \cref{app:eq:burg_constraints} and a finite sequence $Z_1,\dots,Z_n$ of a proper complex Gaussian process with the same covariance matrix as $X_1, \dots,X_n$, we have
     \begin{align*}
         h(X_1,\dots,X_n) &\leq h(Z_1,\dots,Z_n)\\
         &\leq h(Z_1,\dots,Z_p) + \sum_{i=p+1}^{n}h(Z_i|Z_{i-1},\dots,Z_{i-p}),
     \end{align*}
    where we used that the differential entropy of a process with a fixed covariance is maximized by a proper Gaussian random vector, the chain rule \cref{app:eq:chain_rule}, and the fact that conditioning does not increase entropy. Let $Z_1',\dots,Z_n'$ be a proper complex Gaussian AR($p$) process with the same covariances as $Z_1,\dots,Z_n$ up to lag $p$ and of the form of \cref{app:eq:burg_gaussian_arp}, $Z_t' = \sum_{k=1}^{p}a_kZ_{t-k}'+\varepsilon_t$ with $\varepsilon_t\sim\mathcal{CN}(0,\sigma^2_p)$. Then
    \begin{align*}
        h(X_1,\dots,X_n)&\leq h(Z_1,\dots,Z_p) + \sum_{i=p+1}^{n}h(Z_i|Z_{i-1},\dots,Z_{i-p}) \\
        &= h(Z_1',\dots,Z_p') + \sum_{i=p+1}^{n}h(Z_i'|Z_{i-1}',\dots,Z_{i-p}')\\
        & = h(Z_1',\dots,Z_p') + (n-p)\log(\pi e \sigma^2_p),
    \end{align*}
    where we used $h(\sum_{k=1}^p a_k Z'_{i-k} + \varepsilon_i|Z_{i-1}',\dots,Z_{i-p}') = h(\varepsilon_i|Z_{i-1}',\dots,Z_{i-p}')$, which follows from the conditional translational invariance property $h(X+g(Y)|Y) = h(X|Y)$, 
    together with $h(\varepsilon_i|Z_{i-1}',\dots,Z_{i-p}') = h(\varepsilon_i)$, which follows from the independence of $\varepsilon_i$ from the past, and $h(\varepsilon_i) = \log(\pi e \sigma^2_p)$. Finally, for the differential entropy rate we have
    \begin{align*}
        \lim_{n\rightarrow\infty}\frac{h(X_1,\dots,X_n)}{n}&\leq\lim_{n\rightarrow\infty}
        \left[\frac{h(Z_1',\dots,Z_p')}{n} + \frac{n-p}{n}\log(\pi e \sigma^2_p)\right]\\
        &=\log(\pi e \sigma^2_p)
    \end{align*}
\end{proof}

The proof above clarifies why the maximum entropy process is often called the least committal one. Among all processes compatible with the prescribed covariance lags, $h(X_t|\text{past})$ is bounded by the entropy of a Gaussian with the order-$p$ prediction-error variance $\sigma_p^2$. The bound is saturated only when $X_t$ depends on the past solely through its linear predictor, with a Gaussian prediction error independent of the past. In this sense the Burg process adds no structure beyond what the constraints impose.

\Cref{app:thm:burg} generalizes to vector stochastic processes in a straightforward way. We have the following 
\begin{theorem}[Burg, vector case]
\label{app:thm:burg_multi}
Let $\Gamma_0,\dots,\Gamma_p\in\C^{M\times M}$ be a finite matrix-valued PSD sequence such that $\mathcal R^{(p)}\succ0$ and $\Sigma_p\succ0$, with $\mathcal R^{(p)}$ and $\Sigma_p$ defined in \cref{app:eq:covariance_gram_multi,app:eq:Sigma_multi}. Among all complex-valued vector stochastic processes $\{\bm{X}_t\}$ satisfying
\begin{equation}
    \label{app:eq:burg_constraints_multi}
    \E[\bm{X}_{t+k}\bm{X}_t^\dagger] = \Gamma_k,\quad k =0,\dots,p,
\end{equation}
the differential entropy rate is maximized by the proper complex Gaussian VAR process
\begin{equation}
    \label{app:eq:burg_gaussian_varp}
    \bm{X}_t = \sum_{k=1}^{p}A_k\bm{X}_{t-k}+ \bm{\varepsilon}_t,
\end{equation}
with $\bm{\varepsilon}_t\sim\mathcal{CN}(0,\Sigma_p) $ a proper complex Gaussian vector innovation independent of $\{\bm X_s\}_{s<t}$,. The coefficient matrices $
A_1,\dots,A_p \in \C^{M\times M}$ are given by the block YW solution \cref{app:eq:block_yw_solution}.
\end{theorem}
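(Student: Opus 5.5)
The plan is to mirror the proof of \cref{app:thm:burg} block by block, replacing scalar variances with covariance matrices and $\log\sigma^2$ with $\log\det$. Fix $n>p$ and an arbitrary vector process $\{\bm X_t\}$ satisfying \cref{app:eq:burg_constraints_multi}. Let $\bm Z_1,\dots,\bm Z_n$ be proper complex Gaussian vectors whose joint $nM\times nM$ covariance equals that of $(\bm X_1,\dots,\bm X_n)$. By the vector maximum-entropy bound $h(\bm Z)\le\log\det(\pi e K)$, with equality for $\mathcal{CN}$, we get $h(\bm X_1,\dots,\bm X_n)\le h(\bm Z_1,\dots,\bm Z_n)$. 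Applying the chain rule \cref{app:eq:chain_rule}, now in terms of vector blocks, and using that conditioning does not increase entropy, we bound
\begin{equation*}
h(\bm Z_1,\dots,\bm Z_n)\le h(\bm Z_1,\dots,\bm Z_p)+\sum_{i=p+1}^{n}h(\bm Z_i|\bm Z_{i-1},\dots,\bm Z_{i-p}).
\end{equation*}
Each retained term depends only on the joint covariance of $p+1$ consecutive vectors. By the constraints, that covariance is the block-Toeplitz matrix $\mathcal R^{(p+1)}$ built from $\Gamma_0,\dots,\Gamma_p$, regardless of $i$.

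Next I would compute the conditional terms explicitly. For jointly proper Gaussian vectors, $h(\bm Z_i|\text{past }p)=\log\det(\pi e\,\Sigma)$, where $\Sigma$ is the Schur complement of $\mathcal R^{(p)}$ in $\mathcal R^{(p+1)}$. That Schur complement is the covariance of the linear prediction error. By the orthogonality principle and \cref{app:eq:block_yw_solution,app:eq:Sigma_multi}, it equals $\Sigma_p=\Gamma_0-\sum_k A_k\Gamma_k^\dagger$. The hypothesis $\mathcal R^{(p)}\succ0$ makes the $A_k$ well defined, and $\Sigma_p\succ0$ makes the logarithm finite. The bound then reads $h(\bm X_1,\dots,\bm X_n)\le h(\bm Z_1,\dots,\bm Z_p)+(n-p)\log\det(\pi e\Sigma_p)$. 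Since $h(\bm Z_1,\dots,\bm Z_p)=\log\det(\pi e\mathcal R^{(p)})$ is finite, dividing by $n$ and letting $n\to\infty$ gives $\bar h(\{\bm X_t\})\le\log\det(\pi e\Sigma_p)$. I would take a $\limsup$ here if the limit is not known to exist for the arbitrary process.

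It remains to show that the Gaussian VAR process \cref{app:eq:burg_gaussian_varp} attains the bound, and this is the main obstacle. First I must show that the VAR process with $\bm\varepsilon_t\sim\mathcal{CN}(0,\Sigma_p)$ exists as a stationary process with lags $\Gamma_0,\dots,\Gamma_p$. The key fact needed is $\det A(z)\neq0$ for $|z|\le1$.

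I would prove this by embedding the recursion as a first-order system on the stacked state $(\bm X_{t-1},\dots,\bm X_{t-p})$ with companion matrix $\mathbf{A}$. The Yule--Walker equations give a discrete Lyapunov identity $\mathcal R^{(p)}=\mathbf{A}\mathcal R^{(p)}\mathbf{A}^\dagger+\mathbf{E}\Sigma_p\mathbf{E}^\dagger$, with $\mathbf{E}$ selecting the first block. Now take a left eigenvector $v$ of $\mathbf{A}$ with eigenvalue $|\lambda|\ge1$. The identity gives $(1-|\lambda|^2)v^\dagger\mathcal R^{(p)}v=v^\dagger\mathbf E\Sigma_p\mathbf E^\dagger v$. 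The left-hand side is at most zero and the right-hand side at least zero, so both vanish, forcing $\mathbf E^\dagger v=0$ since $\Sigma_p\succ0$. Because of the companion structure, $\mathbf E^\dagger v=0$ then propagates to $v=0$, a contradiction. Hence the spectral radius of $\mathbf A$ is below $1$ and a causal stationary solution exists.

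Its covariance $\mathcal R^{(p)}$ is the unique solution of that Lyapunov equation, so lags $0,\dots,p-1$ match. Lag $p$ then matches by the block YW relation for $k=p$, and lag $0$ by \cref{app:eq:Sigma_multi}.

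Finally, the innovation is independent of the past, so $h(\bm X_i|\bm X_{i-1},\dots)=h(\bm\varepsilon_i)=\log\det(\pi e\Sigma_p)$. This uses the translation invariance $h(\bm X+g(\bm Y)|\bm Y)=h(\bm X|\bm Y)$. Combined with \cref{app:eq:entropy_rate_conditional}, the entropy rate of the VAR process equals the upper bound, which completes the argument.
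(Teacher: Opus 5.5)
Your proposal is correct and follows the paper's route: the proof of \cref{app:thm:burg} lifted blockwise, via the Gaussian maximum-entropy bound, the chain rule truncated to $p$-step conditioning, and evaluation of each conditional term as $\log\det(\pi e\Sigma_p)$. The differences are that you evaluate that term directly as the Schur complement of $\mathcal R^{(p)}$ in $\mathcal R^{(p+1)}$, so your upper bound does not need the auxiliary Gaussian VAR $\bm Z'$ the paper introduces, and that you prove the VAR's stability and covariance matching via the companion-matrix Lyapunov identity, which the paper only asserts; a small notational point is that with the state ordered as $(\bm X_{t-1},\dots,\bm X_{t-p})$ the Lyapunov solution is the block-reversed matrix with blocks $\Gamma_{s-r}$ rather than $\mathcal R^{(p)}$ itself, which is harmless since it is positive definite exactly when $\mathcal R^{(p)}$ is.
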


As in the scalar case, multiplying \cref{app:eq:burg_gaussian_varp} on the right by $\bm X_{t-k}^\dagger$ and taking expectations shows that the VAR process reproduces the constraints in \cref{app:eq:burg_constraints_multi}. The conditions $\mathcal{R}^{(p)}\succ0$ and $\Sigma_p \succ 0$ ensure the process of \cref{app:eq:burg_gaussian_varp} exists and is stationary. If instead $\mathcal R^{(p)}$ or $\Sigma_p$ are singular, every process satisfying \cref{app:eq:burg_constraints_multi} again has entropy rate $-\infty$. Unlike in the scalar case, however, this does not in general correspond simply to a reduction of the autoregressive order. In particular, with $\mathcal R^{(p)}\succ0$, $\Sigma_p$ may be singular without being zero. For any $\bm v\in\ker\Sigma_p$, $\mathbb{E}[|\bm v^\dagger\bm e_t|^2]=\bm v^\dagger\Sigma_p\bm v=0$, so the linear combination $\bm v^\dagger\bm X_t$ is perfectly predictable from the previous $p$ values, while combinations outside $\ker\Sigma_p$ retain a non-zero innovation. This case lies outside the VAR model of \cref{app:eq:VARp}, where we assumed $\Sigma_\varepsilon\succ0$.

\section{Connection with spectral density functional}
\label{app:sec:spectral_density}
In Burg's original work~\cite{Burg1975}, the maximum-entropy problem was formulated directly in terms of the spectral density. By \cref{app:eq:gaussian_entropy_rate_scalar}, for a proper complex Gaussian WSS process with spectral density $f(\theta)$, maximizing the entropy rate subject to \cref{app:eq:burg_constraints} is equivalent to
\begin{equation}
\label{app:eq:burg_spectral_problem}
\max_{f\ge0}\ \frac{1}{2\pi}\int_{-\pi}^{\pi}\log f(\theta)d\theta
\end{equation} 
subject to
\begin{equation}
\label{app:eq:spectral_constraints}
\frac{1}{2\pi}\int_{-\pi}^{\pi}f(\theta)e^{ik\theta}d\theta=\gamma_k,\quad k=0,\dots,p.
\end{equation} 
Here we show that this formulation leads to the same result as \cref{app:thm:burg}. Note that the formulation of \cref{app:sec:burg} is more general, as it requires neither Gaussianity nor wide-sense stationarity, nor the existence of a spectral density.

Let $f(\theta)$ be any feasible spectral density satisfying \cref{app:eq:spectral_constraints}, and let $a_1,\dots,a_p$ be the YW coefficients of \cref{app:eq:yule_walker}. Since the YW equations involve only $\gamma_0,\dots,\gamma_p$, these coefficients, and the corresponding $\sigma_p^2$ in \cref{app:eq:yw_sigma}, are the same for every feasible $f(\theta)$. 
The prediction error \cref{app:eq:AR_prediction_error} can be written as
\begin{equation}
    e_t = X_t - \sum_{k=1}^{p}a_kX_{t-k} = \sum_{k=0}^{p}b_k X_{t-k},
\end{equation}
with $b_0 = 1$ and $b_k = -a_k$ for $k=1,\dots,p$. . The covariance lags of $
\{e_t\}$ are
\begin{equation}
    \gamma^{(e)}_k = \E[e{t+k}e^*]=\sum_{r,s=0}^{p}b_r b_s^*\gamma_{k+s-r}.
\end{equation}
Using \cref{app:eq:spectral_density_scalar} for $\gamma_{k+s-r}$, we obtain
\begin{align}
    \gamma^{(e)}_k &= \frac{1}{2\pi} \int_{-\pi}^{\pi}f(\theta) e^{ik\theta}\left|\sum_{r=0}^p b_r e^{-ir\theta}\right|^2d\theta\nonumber\\
    &= \frac{1}{2\pi} \int_{-\pi}^{\pi}f(\theta) e^{ik\theta}\left|A(e^{-i\theta})\right|^2d\theta\nonumber\\
    &= \frac{1}{2\pi} \int_{-\pi}^{\pi} f_{e}(\theta)e^{ik\theta}d\theta,
\end{align}
where
\begin{equation}
    f_{e}(\theta) = \left|A(e^{-i\theta})\right|^2 f(\theta)
\end{equation}
is the spectral density of $\{e_t\}$ and $A(z)=1-\sum_{k=1}^pa_kz^k$ is the AR polynomial. In particular,
\begin{equation}
\mathbb{E}[|e_t|^2] =\gamma_0^{(e)} = \frac{1}{2\pi}\int_{-\pi}^{\pi}f_e(\theta)d\theta=\sigma_p^2.
\end{equation}
Since $R^{(p)}\succ0$ and $\sigma_p^2>0$, $A(z)$ has no zeros for $|z|\leq1$. Jensen's formula then gives $\frac{1}{2\pi}\int_{-\pi}^{\pi}\log|A(e^{-i\theta})|^2d\theta=\log|A(0)|^2=0$, and therefore
\begin{align}
\frac{1}{2\pi}\int_{-\pi}^{\pi}\log f(\theta)d\theta
&= \frac{1}{2\pi}\int_{-\pi}^{\pi}\log f_e(\theta)d\theta \nonumber\\
&\leq \log\left[\frac{1}{2\pi}\int_{-\pi}^{\pi}f_e(\theta)d\theta\right]\nonumber\\
&= \log\sigma_p^2,
\end{align}
where the inequality is Jensen's inequality for the concave logarithm. Equality holds if and only if $f_e$ is constant almost everywhere, i.e. $f_e=\sigma_p^2$. Therefore, the spectral density of the process maximizing the differential entropy rate is
\begin{equation}
     f_{\rm ME}(\theta) = \frac{\sigma_p^2}{\left|1-\sum_{k=1}^{p}a_ke^{-ik\theta}\right|^2},
\end{equation}
which is the spectral density of an AR($p$) introduced in \cref{app:eq:spectral_arp} with covariance lags compatible with $\gamma_0,\dots,\gamma_p$. Therefore $f_{\rm ME}(\theta)$ is the unique solution of \cref{app:eq:burg_spectral_problem,app:eq:spectral_constraints}, and $\max \bar{h}(\Xt)=\log(\pi e \sigma^2_p)$, in agreement with what we found in the proof of \cref{app:thm:burg}.

\section{More about the differences with standard linear prediction}
\label{app:sec:linear_prediction}
Here, we highlight some important differences between our approach and standard LP~\cite{Barthel_2009_spectral, White_2008_spectral, Tang_2025}. In LP, the sampled correlator time series $\{g_k\}_{k=0}^{n-1}$ is modeled by the recursion $g_t = \sum_{k=1}^{p} \tilde{a}_k g_{t-k}$, and the coefficients $ \{\tilde{a}_k\}$ are obtained by least-squares minimization of the one-step prediction error. At first glance, this may seem equivalent to our approach. The differences, however, are subtle but fundamental.

First, LP treats $\{g_k\}$ as a signal realization rather than as the autocovariance sequence of a WSS process, so the AR ansatz has no statistical justification; its justification is entirely physical. A signal consisting of $p$ complex exponentials, $g_k = \sum_{m=1}^{p} W_m e^{-i\omega_m k}$, satisfies a linear recurrence of order $p$, formally identical to the AR($p$) recursion in \cref{app:eq:ARp} without the innovation term. As discussed for the case of a singular $R^{(n)}$ in \cref{sec:arp}, this representation is exact for any finite system, but may require a number of frequencies growing exponentially with system size. It is therefore practically useful only for systems whose spectrum is dominated by a few frequencies. Even then, the short-time dynamics of typical many-body correlation functions is governed by interference among many frequencies, with the dominant ones emerging only after a transient. In practice, one therefore often discards the first few points of the time series to improve the quality of the fit. 

Second, the order $p$ of this AR($p$) model is a hyperparameter whose choice is often heuristic since numerical or experimental time series are usually too short to apply principled information theoretic criteria, such as AIC or BIC~\cite{StoicaMoses2005}. Choosing $p$ too small results in overdamping, while choosing it too large leads to spurious, unstable poles. A common practice is to choose $n/3<p<n/2$~\cite{Barthel_2009_spectral, Stroeks_2022, Erpenbeck_2026}, so the model has far fewer parameters than data points, in contrast with our me(V)AR approach, where the maximal order $p = n-1$ can be used. A common issue occurring when applying LP to many-body correlation functions is that some of the poles of the fitted AR model lie outside the unit circle. These must be regularized, either by projecting them onto the unit circle or by discarding them (setting their amplitudes to zero). Finally, LP ensures neither that the extended sequence is pSD (or, equivalently, that its spectral function is non-negative) nor that it satisfies the total spectral weight sum rule. 

It is worth noting YW-type normal equations can also arise in LP within a least-squares approach. In that case, however, the covariance lags are estimated from the finite time series as time-averaged products of the samples~\cite{Makhoul_1975_linear, StoicaMoses2005, CoverThomas2006}), e.g., $\hat{\gamma}_k=\frac{1}{n}\sum_j g_j g_{j-k}^*$,
as commonly done in LP approaches~\cite{Barthel_2009_spectral, Lin_2023}, rather than being directly identified with the correlator samples, $\gamma_k=g_k$. Such sample covariance estimates are themselves subject to finite-sample effects: the normalization above guarantees a PSD Toeplitz matrix but yields, in general, a biased estimator of the covariance lags, whereas the unbiased normalization, with $n$ replaced by $n-k$, has variance growing with $k$ and does not guarantee PSDness~\cite{Burg1975}. More importantly for our purposes, the difference between the two constructions is not merely cosmetic: the time-averaged lags are quadratic functionals of the samples $g_k$, and therefore define a different covariance sequence and, consequently, a spectral density different from the maximum-entropy-rate solution of \cref{eq:maxent_spectrum}.

\section{PSD-Toeplitz denoising}
\label{app:sec:psd_denoising}
In this section we briefly review the denoising technique introduced in Ref.~\cite{Kemper_2024} and generalize it to the matrix correlator case.
\subsection{Scalar PSD-Toeplitz denoising}
Let us start from a single correlation function $g(t)$, sampled as $g_k = g(k\Delta t)$. Since $g(t)$ is a PSD function, the associated Gram matrix
\begin{equation}
	\label{app:eq:gt_Gram}
	R^{(n)} = \begin{bmatrix}
		g_0 & g_1^*  & \cdots & g_{n-1}^*\\
		g_1 & g_0 & \cdots & g_{n-2}^*\\
		\vdots & \vdots  & \ddots  & \vdots\\
		g_{n-1} & g_{n-2} & \cdots & g_0\\
	\end{bmatrix}
\end{equation}
must be a Hermitian, Toeplitz, and PSD matrix. In general, a Gram matrix constructed from noisy measurements of $g(t)$ will in general violate positive semi-definiteness and, if its entries are not all obtained from a single sequence $\{g_k\}$ (e.g., if $g_{-k}$ is measured independently rather than set to $g_k^*$), also the Hermitian and Toeplitz structure. These properties must therefore be restored. Moreover, the value $g_0 = g(0)$ can often be computed analytically, and provides an additional mitigation strategy~\cite{Kemper_2024}. The measured Gram matrix can then be denoised by finding a matrix in the intersection of the convex sets of PSD and Hermitian Toeplitz matrices, e.g., using alternating projections or Dykstra's algorithm \cite{Grigoriadis_1994, Higham_1988, Higham_2002, Bauschke_1996, Eberle_2003}. Dykstra's algorithm returns the Frobenius-nearest such matrix, whereas alternating projections converge to a point in the intersection that is, in general, not the nearest one. In practice, we found alternating projections to be robust and fast. Each iteration consists of the following three steps (omitting the superscript for simplicity):
\begin{enumerate}
	\item{Hermitianize: $R\leftarrow (R^\dagger+R)/2$.}
	\item{Regularized PSD projection: $R\leftarrow V\tilde{\Lambda}V^\dagger$, where $R=V \Lambda V^\dagger$ is the eigendecomposition of $R$, $\Lambda = \mathrm{diag}(\lambda_0,\dots,\lambda_{n-1})$, and $\tilde{\Lambda} = \mathrm{diag}(\tilde{\lambda}_0,\dots,\tilde{\lambda}_{n-1})$ with $\tilde{\lambda}_i = \max(\lambda_i, \delta)$, $\delta\geq 0$.}
	\item{Toeplitz projection: $R_{ij} \leftarrow \bar{g}_{i-j}$, with 
	\begin{equation}
		\label{app:eq:toeplitz_average}
		\bar{g}_k= \frac{1}{n-k}\sum_{i=0}^{n-k-1} R_{i+k,i},
	\end{equation}
	for $k\geq0$ and $\bar{g}_{-k}= \bar{g}^*_k$ (that is, each diagonal of $R$ is replaced by the average of its entries).}
\end{enumerate}
The three steps are repeated until the Frobenius-norm change between successive iterates falls below a chosen tolerance. Note that in the PSD projection step we introduced an additional regularization step with respect to the original algorithm discussed in Ref.~\cite{Kemper_2024}: instead of requiring $R \succeq 0$, we project $R$ onto the set of matrices with $R\succeq\delta I$. This is particularly useful in the presence of noisy data. A margin $\delta > 0$ keeps $R$ strictly positive definite and also makes positive semi-definiteness robust to perturbations. For a Hermitian noise matrix $E$, Weyl's inequality~\cite{HornJohnson1991} gives $\lambda_{\rm min}(R+E) \geq \lambda_{\rm min}(R) + \lambda_{\rm min}(E) \geq \delta - d(E)$, where $d(E) = \max(0, -\lambda_{\rm min}(E))$. Hence, $R+E \succeq 0$ whenever $d(E)\leq\delta$. This suggests setting $\delta$ at the scale of the noise-induced negative eigenvalues, which we estimate via a parametric bootstrap. Given the measured correlator $g_k$ and its error $\sigma_k$, we draw $B$ bootstrap samples $g^{(b)}_k$, $b=1,\dots,B$ from the corresponding noise model (which, for simplicity, we can assume to be Gaussian) and form the Hermitian Toeplitz Gram matrices $E^{(b)}$ of the fluctuations $g^{(b)}_k - g_k$. We then set $\delta = c\,\delta_{95}$, where $\delta_{95}$ is the 95th percentile of $d(E^{(b)})$ and $c$ is a scale factor. Since the Weyl bound is a worst case over all directions, while instability is driven only by the near-null directions of $R$, we expect $c<1$. We found that $c = 0.1$ gives good results for the example considered in this work. 

\subsection{Matrix PSD-Toeplitz denoising}
The algorithm discussed in the previous section generalizes straightforwardly to the case of a matrix correlator $G_{ij}(t)$, sampled as $G_{ij,k} = G_{ij}(k\Delta t)$. The associated Hermitian PSD Gram matrix 
\begin{equation}
	\label{app:eq:Gt_block_gram}
	\mathcal{R}^{(n)} = 
	\begin{bmatrix}
		G_0 & G^\dagger_1  & \cdots & G^\dagger_{n-1}\\
		G_1 & G_0  & \cdots & G^\dagger_{n-2}\\
		\vdots  & \vdots & \ddots  & \vdots\\
		G_{n-1} & G_{n-2}  & \cdots & G_0\\
	\end{bmatrix}
\end{equation}
has a block-Toeplitz structure, i.e., it is constant along block diagonals, while the blocks $G_k$ themselves are in general not Toeplitz (nor, for $k\neq 0$, Hermitian). The Hermitianization and PSD projection steps apply unchanged, while the Toeplitz projection step should be modified is modified to account for the block structure: each block $\mathcal{R}^{(n)}_{rs}$ is replaced by $\bar{G}_{r-s}$, with 
\begin{equation}
	\bar{G}_k= \frac{1}{n-k}\sum_{r=0}^{n-k-1} \mathcal{R}^{(n)}_{r+k,r}
\end{equation}
for $k\geq0$ and $\bar{G}_{-k}= \bar{G}^\dagger_k$. The regularization parameter $\delta$ can be estimated as in the scalar case, with $E^{(b)}$ the block Gram matrices of the bootstrap fluctuations.

\subsection{PSD-Toeplitz denoising, imperfect state preparation, and Trotterization}
In near-term applications to quantum computing, one usually cannot prepare a stationary state of a system exactly, and time evolution is performed via approximate algorithms such  as Trotterization. Here, we briefly show how the PSD-Toeplitz denoising can, in principle, partially mitigate the errors arising in such circumstances. Let us start from errors in state preparation. For simplicity, we consider the scalar correlator case and focus on the preparation of the ground state. We can then write the initial state as 
\begin{equation}
\ket{\psi} = \sqrt{1-\varepsilon^2}\ket{0} + \varepsilon \ket{\chi},
\end{equation} 
with $\ket{0}$ the target ground state and $\ket{\chi} = \sum_{l>0}c_l\ket{l}$, with $\sum_{l>0}|c_l|^2 = 1$, where $\ket{l}$ are the eigenstates of the system Hamiltonian with energy $E_l$. For simplicity, we assume a non-degenerate spectrum. Since, in general, $\ket{\psi}$ is not a stationary state, we have to compute the two-time correlation function $g_{ij} = g(t_i, t_j)$, 
\begin{equation}
	g_{ij} = \tr[\rho A^{\dagger}(t_i)A(t_j)],
\end{equation}
where we have introduced the density matrix
\begin{align}
	\rho &= \ket{\psi}\bra{\psi} \nonumber\\
    &= (1-\varepsilon^2)\ket{0}\bra{0} \nonumber\\
	& + \varepsilon\sqrt{1-\varepsilon^2}\sum_{l>0}[c_l \ket{l}\bra{0} + c_l^*\ket{0}\bra{l}]\nonumber\\
	& + \varepsilon^2 \sum_{m,l > 0} c_m c_l^* \ket{m}\bra{l}. 
\end{align}
Introducing the center and relative times $T_{ij} = (t_i + t_j)/2$ and $\tau_{ij} = t_i-t_j$, with $t_j = j \Delta t$,  we have
\begin{equation}
	g_{ij} = \tr[\rho(T_{ij})B(\tau_{ij})],
\end{equation}
with $B(\tau_{ij}) = A^\dagger(\tau_{ij}/2)A(-\tau_{ij}/2)$ and $\rho(T) = e^{-iHT}\rho\,e^{iHT}$. Since, in general, $[H, \rho] \neq 0$, $\rho(T)$ is time dependent:
\begin{equation}
	\rho(T_{ij}) = \sum_{m,l} \rho_{ml} e^{-i\omega_{ml}T_{ij}}\ket{m}\bra{l},
\end{equation}
with $\omega_{ml} = E_m - E_l$, $\rho_{00} = (1-\varepsilon^2)$, $\rho_{l0} = \rho_{0l}^* = \varepsilon\sqrt{1-\varepsilon^2}\,c_l$ for $l>0$, and $\rho_{ml} = \varepsilon^2 c_m c^*_l$ for $m,l>0$. In this case, the Gram matrix of \cref{app:eq:gt_Gram} is still Hermitian and PSD but it lacks the Toeplitz structure:
\begin{equation}
    \label{app:eq:two_time_gram_matrix}
	R^{(n)} = \begin{bmatrix}
		g_{00} & g_{01}  & \cdots\\
		g_{10} & g_{11} & \cdots \\
		\vdots & \vdots  & \ddots
	\end{bmatrix}.
\end{equation}
Note that this requires measuring $\mathcal{O}(n^2)$ two-time correlation functions $g_{ij}$, instead of $\mathcal{O}(n)$, as in the stationary case.
If we now apply the PSD-Toeplitz denoising protocol, the Toeplitz projection replaces the elements $g_{i+k,i}$ of the $k$-th diagonal with their average,
\begin{align}
	\bar{g}_k &= \frac{1}{M_k}\sum_{i=0}^{M_k-1} \tr[\rho(T_{i+k,i})B({\tau_k})] \nonumber \\
	&= \tr[\bar{\rho}_kB({\tau_k})],
\end{align}
where we defined $M_k = n-k$, used that $B(\tau_{i+k,i}) = B({\tau_k})$ with $\tau_k = k\Delta t$, and introduced 
\begin{equation}
\bar{\rho}_k = \frac{1}{M_k} \sum_{i=0}^{M_k-1} \rho(T_{i+k,i}).
\end{equation}
Since $T_{i+k,i} = (i + k/2)\Delta t$, the generic matrix element of $\bar{\rho}_k$ is given by
\begin{equation}
	(\bar{\rho}_k)_{ml} = \rho_{ml}D_k(\omega_{ml}),
\end{equation}
with 
\begin{equation}
	D_k(\omega) = e^{-i\omega(n-1)\Delta t/2}\frac{\sin(M_k\omega\Delta t/2)}{M_k\sin(\omega\Delta t/2)}.
\end{equation}
Since $D_k(0) = 1$, the diagonal elements of $\rho$ are unaffected by the average. However, the off-diagonal components $\rho_{m,l}$ with $m \neq l$ are suppressed by a factor
\begin{align}
	|D_{k}(\omega_{ml})| &= \left|\frac{\sin(M_k\omega_{ml}\Delta t/2)}{M_k\sin(\omega_{ml}\Delta t/2)}\right| \nonumber \\
	&\simeq \left|\mathrm{sinc}\left(\frac{M_k \omega_{ml}\Delta t}{2}\right)\right|,
\end{align}
where the last approximation holds for $\omega_{ml}\Delta t \ll 1$. Note that $|D_{k}(\omega_{ml})|\leq 1$ and $|D_{k}(\omega_{ml})|\ll 1$ for $|\omega_{ml}|M_k \Delta t\gg1$ provided $\omega_{ml}\Delta t\neq 2\pi q$ with $q\in\mathbb{Z}$ (i.e., in the absence of aliasing). Expanding the correlation function $g_{ij}$ in powers of $\varepsilon$,
\begin{align}
	g_{ij} &= \bra{\psi}A^{\dagger}(t_i)A(t_j)\ket{\psi}\nonumber\\
	&=g^{(0)}_{ij} + \varepsilon g^{(1)}_{ij} + \varepsilon^2 g^{(2)}_{ij} + \mathcal{O}(\varepsilon^3), 
\end{align}
we see that Toeplitz averaging suppresses the contribution $g^{(1)}_{ij}$, which arises entirely from the coherences $\rho_{l0}$, and the off-diagonal part of $g^{(2)}_{ij}$, by a $k$- and $\omega_{ml}$-dependent factor. For $g^{(1)}_{ij}$, the relevant frequencies satisfy $|\omega_{l0}| \geq \Delta$, with $\Delta = E_1 - E_0$ the spectral gap, so that suppression requires $M_k\Delta t \gg 1/\Delta$. The suppression is weakest at the largest lags: for $k \to n-1$, $M_k \to 1$ and no averaging takes place.
This becomes particularly clear in the limit $n\to\infty$ at fixed $k$ (so that $M_k \to \infty$), where
\begin{equation}
	\bar\rho_k \rightarrow (1-\varepsilon^2)\ket{0}\bra{0} + \varepsilon^2 \sum_{l>0} |c_l|^2 \ket{l}\bra{l}
\end{equation}
converges to the diagonal ensemble $\forall k$, and only the diagonal contributions of order $\epsilon^2$ survive: 
\begin{equation}
\bar g_k \rightarrow (1-\varepsilon^2)\bra{0}A^\dagger(\tau_k)A\ket{0} + \varepsilon^2 \sum_{l>0}|c_l|^2\bra{l}A^\dagger(\tau_k)A\ket{l}.
\end{equation}
The Toeplitz projection thus reduces the leading state-preparation error from $\mathcal{O}(\varepsilon)$ to $\mathcal{O}(\varepsilon^2)$. The residual $\mathcal{O}(\varepsilon^2)$ term is itself the correlation function of a stationary state, the diagonal ensemble, and therefore cannot be removed by PSD-Toeplitz denoising.

A similar argument can be made for the error introduced by Trotterization. Denoting the (time-independent) Hamiltonian of the system by $H$, its time evolution operator is $U(t) = e^{-iHt}$.  Let us introduce the single-step Trotter evolution operator $\tilde{U}(\delta t) $. Note that $ \tilde{U}(\delta t) \neq e^{-iH\delta t}$. For simplicity, we assume $\delta t = \Delta t$, with $\Delta t$ the sampling interval of the correlator $g(t_i, t_j)$ (more generally, it suffices that $\Delta t$ is an integer multiple of a fixed Trotter step). The Trotterized two-time correlation function is
\begin{equation}
	\tilde{g}_{ij} = \tr[\rho \tilde{A}_i^\dagger \tilde{A}_j],
\end{equation}
with $\rho=\ket{0}\bra{0}$ the exact ground state density matrix and $\tilde{A}_i = \tilde{U}^{-i}A\tilde{U}^i$. During the Toeplitz projection, each $\tilde{g}_{i+k,i}$ is replaced with the diagonal average of  \cref{app:eq:toeplitz_average}. In this case, 
\begin{equation}
	\tilde{g}_{i+k,i} = \tr[\rho_i \tilde{B}_k],
\end{equation}
with $\rho_i = \tilde{U}^i \rho \tilde{U}^{-i}$ and $\tilde{B}_k = \tilde{A}_k^{\dagger}A$. Note that for the exact dynamics $\tilde{U} = U$ and $\rho_i = \rho$, and therefore we retrieve the stationary result $\tilde{g}_{i+k,i} = g_{k}$. Let us introduce the eigenbasis of the Trotter evolution operator, $\{\ket{\tilde{l}}\}$, with $\tilde{U}\ket{\tilde{l}} = e^{-i\theta_l}\ket{\tilde{l}} $. Expanding the ground state of $H$, $\ket{0}$, in this basis, we have
\begin{equation}
	\rho = \ket{0}\bra{0} = \sum_{m,l} \tilde{\rho}_{ml}\ket{\tilde{m}}\bra{\tilde{l}}.
\end{equation}
When replacing each $\tilde{g}_{i+k,i}$ with its average according to \cref{app:eq:toeplitz_average}, we obtain
\begin{equation}
	\bar{\tilde{g}}_k = \tr[\bar{\rho}_k \tilde{B}_k],
\end{equation}
with 
\begin{equation}
	\bar{\rho}_k = \frac{1}{M_k}\sum_{i =0}^{M_k-1} \tilde{U}^i \rho \tilde{U}^{-i}.
\end{equation}
In the eigenbasis of the Trotter evolution operator, the matrix elements of $\bar{\rho}_k$ are
\begin{equation}
    (\bar{\rho}_k)_{ml} = \tilde{\rho}_{ml}\frac{1}{M_k}\sum_{j=0}^{M_k-1}e^{-ij\theta_{ml}} = \tilde{\rho}_{ml}\tilde{D}_k(\theta _{ml}),
\end{equation}
with $\theta_{ml} = \theta_m - \theta_l$ and, as in the case of approximate ground-state preparation,
\begin{equation}
    \tilde{D}_k(\theta)= e^{-i(M_k-1)\theta/2}\frac{\sin(M_k \theta/2)}{M_k \sin(\theta/2)}.
\end{equation}

Hence, the off-diagonal elements of $\bar{\rho}_k$ are suppressed by a $k$- and $\theta_{ml}$-dependent factor $|\tilde{D}_k(\theta_{ml})|$, which is small for $|\theta_{ml}|M_k\gg 1$ provided $\theta_{ml}\neq 2\pi q$ with $q\in\mathbb{Z}$ (i.e., no aliasing). Note, however, that the unsuppressed diagonal contributions do not correspond to those of the target dynamics and, in general, the errors on the correlators will remain $\mathcal{O}(\delta t^p)$. Indeed, for a $p$-th order Trotter formula, we can write $\tilde{U}(\delta t) = e^{-iH_{\rm eff}\delta t}$ , with 
\begin{equation}
	H_{\rm eff} = H + \delta t^{p}K + \mathcal{O}(\delta t^{p+1}),
\end{equation}
and $H_{\rm eff}\ket{\tilde{l}} = \tilde{E}_l\ket{\tilde{l}}$, so that $\theta_l = \tilde{E}_l\,\delta t$ (mod $2\pi$). From perturbation theory, we find that $\ket{\tilde{0}} = \ket{0} + \mathcal{O}(\delta t^p)$, or, conversely,
\begin{equation}
\ket{0} = \sqrt{1-\varepsilon^2}\ket{\tilde{0}} + \varepsilon\sum_{l>0} b_l \ket{\tilde{l}}, \qquad \varepsilon = \mathcal{O}(\delta t^p),
\end{equation}
with $\sum_{l>0}|b_l|^2 = 1$. So, effectively, preparing the system in $\ket{0}$ and evolving it with $H_{\rm eff}$ falls into the previous case of approximate ground-state preparation, with target state $\ket{\tilde 0}$ and error $\varepsilon = \mathcal{O}(\delta t^p)$. In the limit $n \to \infty$ at fixed $k$, the Toeplitz average therefore yields
\begin{equation}
\bar{\tilde{g}}_k \rightarrow \bra{\tilde 0}\tilde{A}_k^\dagger A\ket{\tilde 0} + \mathcal{O}(\delta t^{2p}),
\end{equation}
i.e., the correlation function of the Trotterized dynamics in its own ground state. Its frequencies and spectral weights both differ from the exact ones at order $\delta t^p$:
\begin{align}
\tilde{E}_l - \tilde{E}_0 &= E_l - E_0 + \mathcal{O}(\delta t^p), \\
|\bra{\tilde l}A\ket{\tilde 0}|^2 &= |\bra{l}A\ket{0}|^2 + \mathcal{O}(\delta t^p),
\end{align}
where the quasi-energy differences are defined modulo $2\pi/\delta t$. So, the Toeplitz average can only suppress the $ \mathcal{O}(\delta t^p)$ contribution due to the ground state mismatch, but not the one due to the different effective dynamics. Since $\ket{\tilde 0}$ is an eigenstate of $\tilde U$, the latter is the correlation function of an exactly stationary process, and hence a valid PSD Toeplitz sequence: the Trotter error is a bias that PSD-Toeplitz denoising cannot detect.

As noted below \cref{app:eq:two_time_gram_matrix}, mitigating ground state preparation and Trotterization errors via the PSD-Toeplitz denoising protocol described above requires measuring the two-time correlation function $g(t_i, t_j)$ and, therefore, requires $O(n^2)$ measurements instead of $O(n)$, as in the stationary case.

\end{document}